\newtheorem{thm}{Theorem}
\DeclareMathOperator*{\argmax}{arg\,max}
\DeclareMathOperator*{\argmin}{arg\,min}
\let\oldnl\nl
\newcommand{\nonl}{\renewcommand{\nl}{\let\nl\oldnl}}
\begin{document}
\title{\textcolor{black}{Energy-Efficient Task Offloading Under E2E Latency Constraints}}

\author{Mohsen~Tajallifar,~
	Sina~Ebrahimi,~
	Mohammad~Reza~Javan,
	Nader~Mokari,
	and~Luca~Chiaraviglio,
	\thanks{M. Tajallifar, S. Ebrahimi, and N. Mokari are with the Department
		of Electrical and Computer Engineering, Tarbiat Modares University, Tehran, 14115-111 Iran e-mail: nader.mokari@modares.ac.ir. M. Javan is with Shahrood University of Technology. L. Chiaraviglio is with University of Rome Tor Vergata.}}

\maketitle
\vspace{-50 pt}
\begin{abstract}
\textcolor{black}{In this paper, we propose a novel resource management scheme {that} jointly allocates the transmit power and computational resources in a centralized radio access network architecture. The network comprises a set of computing nodes to which the requested tasks of different users are offloaded. The optimization problem minimizes the energy consumption of task offloading while takes the end-to-end-latency, i.e., the transmission, execution, and propagation latencies of each task, into account. We aim to allocate the transmit power and computational resources such that the maximum acceptable latency of each task is satisfied. Since the optimization problem is non-convex, we divide it into two sub-problems, one for transmit power allocation and another for task placement and computational resource allocation. Transmit power is allocated via the convex-concave procedure. In addition, a heuristic algorithm is proposed to jointly manage computational resources and task placement. We also propose a feasibility analysis that finds a feasible subset of tasks. Furthermore, a disjoint method that separately allocates the transmit power and the computational resources is proposed as the baseline of comparison. A lower bound on the optimal solution of the optimization problem is also derived based on exhaustive search over task placement decisions and utilizing Karush–Kuhn–Tucker conditions. Simulation results show that the joint method outperforms the disjoint method in terms of acceptance ratio. Simulations also show that the optimality gap of the joint method is less than $5\%$.}
\end{abstract}

\begin{IEEEkeywords}
Mobile edge computing, task offloading, resource allocation, end-to-end latency, task placement.
\end{IEEEkeywords}

%
\IEEEpeerreviewmaketitle

\section{Introduction}\label{introduction}
\vspace{-5 pt}
\subsection{Background}
\IEEEPARstart{I}{n} {order to fulfill the requirements of 5G mobile networks, {key} enabling technologies such as network function virtualization (NFV) and multi-access/mobile edge computing (MEC) are introduced. With NFV, the network functions (NFs) that traditionally used dedicated hardware are implemented in applications running on top of commodity servers \cite{yi2018comprehensive}.}
{On the other hand, MEC aims to support {low-latency} mobile services by bringing the remote servers closer to the mobile users \cite{mach2017mobile,etsi2016mobile}.} Moreover, MEC enables the offloading of the computational burden of users' tasks to reduce the impact of the limited battery power of {user equipment (UE)}.  Note that when executing servers are NFV-enabled, they are able to process various types of tasks. As a result, there is no restriction on offloading a task to a predetermined server.

\begin{figure}[!t]
\centering
\includegraphics[width=9cm]{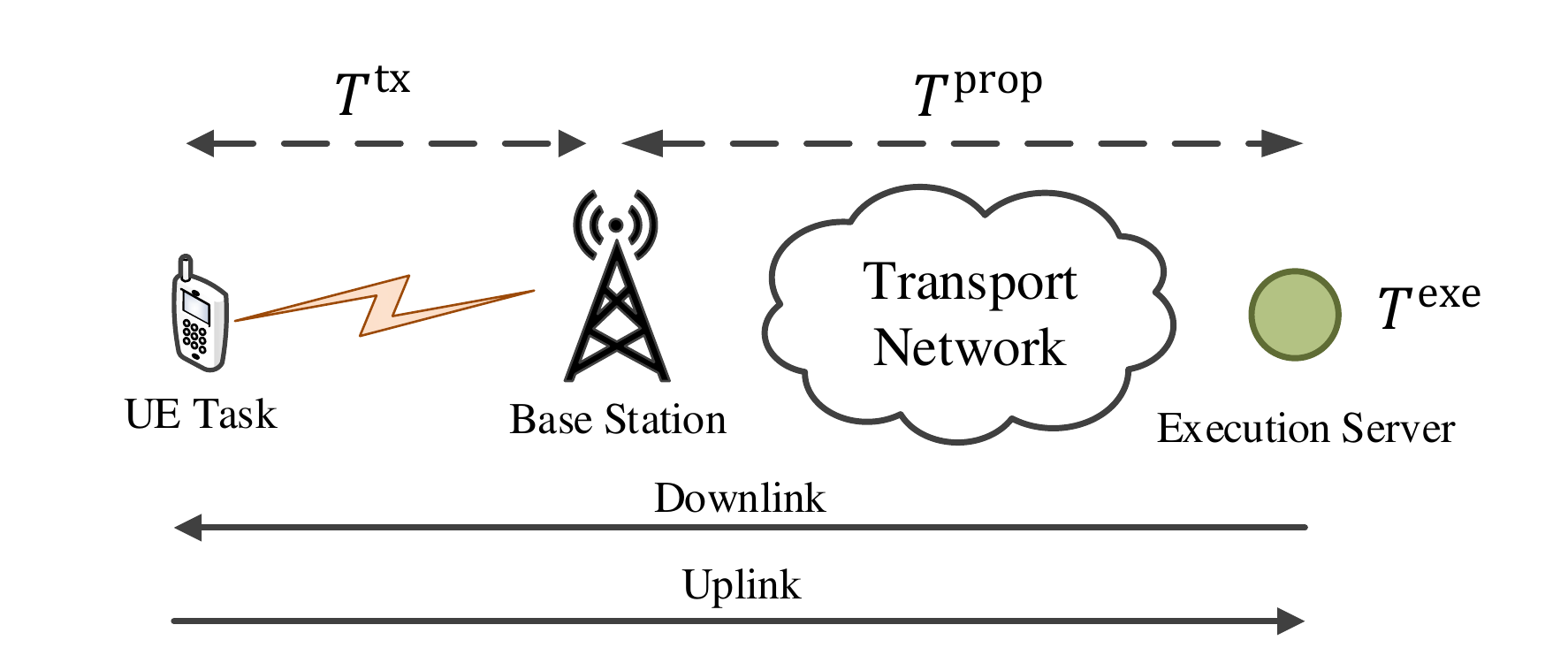}
\caption{\small{A typical task offloading example.}}
\label{offloading_fig}
\end{figure}
\setlength{\textfloatsep}{0pt}
A typical task offloading example is shown in Fig. \ref{offloading_fig}. In task offloading, the non-processed data of a task is sent from UE to an executing server that offloads the computational burden of the task execution on the executing server. As Fig. \ref{offloading_fig} shows, the user transmits the non-processed data of the task over the wireless link to its serving base station, which results in transmit latency $T^{\rm tx}$. Then, the received data is transmitted to an executing server. Executing servers are placed at the base station and distant nodes in the transport network. The data transmission through the transport network adds the propagation latency $T^{\rm prop}$ to the offloading process. Finally, the received data is processed at the executing server with execution latency $T^{\rm exe}$ and then is sent back to the user over the downlink. Therefore, the end-to-end (E2E) latency of task offloading is equal to the summation of $T^{\rm tx}$, $T^{\rm prop}$, and $T^{\rm exe}$ in both uplink and downlink.
\vspace{-15 pt}
\subsection{Related Works} \label{related-works}
\textcolor{black}{
We classify the related works on task offloading into four categories and discuss their applicability in practical scenarios.
\subsubsection{Task offloading to multiple executing servers}
In task offloading, a UE decides to whether offload a task to a single executing server or to select an executing server out of multiple servers. Offloading a task to one server in a set of executing servers in a multi-tier heterogeneous network is considered in \cite{guo2018computation, almughalles2019task}. Moreover, the authors in \cite{yang2018mobile, tran2019joint, dinh2017offloading} propose to offload a user's task to one of executing servers at base stations in a multi-cell network. Note that in the aforementioned works, the executing servers are located at the edge of the radio access network and the computational resources in the non-radio part of the network are not considered. In contrast, it is possible to offload a task to any server in the network in \cite{yang2018cost}, i.e., servers in radio access and non-radio parts of the network. However, radio resources are not allocated in \cite{yang2018cost}. Note that, ignoring computational resources in the non-radio part of the network or ignoring radio resource allocation results in an inefficient task offloading.
\subsubsection{Task placement and computational resource allocation}
Task offloading is comprised of two steps: i) task placement to select an executing server, and ii) computational resource allocation that allocates the resources of the executing server to each task. In this context, various works only focus on task placement with given computational resources for each task \cite{chen2015efficient, zhang2016energy, almughalles2019task, yang2018mobile, guo2018computation, li2017efficient, yang2018cost}, while others include resource allocation as well \cite{ zhao2017energy, zhang2017multi, zhang2017energy, wang2017joint, you2016energy, zhou2018computation, khalili2019joint, tran2019joint, zhang2018joint, xia2018power}. Note that the servers in non-radio part of the network are not involved in these works. As a result, computationally intensive tasks with moderate sensitivity to latency may occupy the capacity of executing servers in radio part of the network while high capacity servers in non-radio part of the network are underutilized.
\subsubsection{Joint Radio and Computational Resource Allocation}
Extensive research is made on joint radio and computational resource allocation \cite{xia2018power, zhang2018joint, tran2019joint, chen2018resource, li2018joint, khalili2019joint, zhang2016energy, zhou2018computation, you2016energy, guo2017energy, wang2017joint, chen2016joint, al2017energy, zhang2017energy, yu2016joint, zhang2017multi, zhao2017energy}. In these works, radio resources including transmit power and/or bandwidth as well as computational resources are allocated to each task. Energy-efficient resource allocation is performed in\cite{xia2018power, khalili2019joint,zhang2016energy, you2016energy, guo2017energy, al2017energy, yu2016joint, zhao2017energy}, and a weighted combination of consumed energy and latency is optimized in \cite{zhang2017multi, zhang2017energy, chen2016joint, wang2017joint, chen2018resource, tran2019joint, zhang2018joint}. Moreover, the impact of radio link quality without radio resource allocation is taken into account by \cite{yang2018mobile, almughalles2019task, chen2015efficient, dinh2017offloading, liu2016delay, li2018deep}. In these works, the latency of data transmission over radio links is taken into account, which impacts the optimal task placement. Note that although joint optimization of radio and computational resources increases the degrees of freedom in task offloading, the available computational resources in the radio access network are very limited, which limit the acceptance ratio of the network.}
\textcolor{black}{\subsubsection{Feasibility Analysis}
When task offloading is subjected to a maximum acceptable latency,  sufficient resources are required in various parts of the network. In case of insufficient resources, a feasibility analysis is needed to determine a feasible subset of requested tasks. One approach to face infeasibility is making some simplifying assumptions, e.g., assuming sufficient available resources for task offloading \cite{yang2018cost, chen2018resource} or offloading a task when it is beneficial, i.e., when offloading results in less energy consumption or latency \cite{chen2015efficient}. In practice, however, the resources are limited and tasks are subjected to execution deadlines. As a result, a feasibility analysis is inevitable. The feasibility analysis is performed by introducing a binary optimization variable, which is one when the task is accepted or zero when the task is rejected \cite{zhang2018joint, tran2019joint, khalili2019joint, zhang2017energy, zhao2017energy, xia2018power, li2017efficient, guo2018computation, yu2016joint}. Note that finding optimal binary variables results in combinatorial optimization problems that are challenging and of high complexity.}

\subsection{Motivation}
\textcolor{black}{The performance of a task offloading method is mainly measured by its latency and energy consumption. In practice, E2E latency comes from radio links, transport network links, and execution at the servers; and the energy consumption is impacted by consumed transmit power and computational resources.}

\textcolor{black}{Optimizing the performance of task offloading necessitates a joint optimization of all available resources in the network. However, existing works optimize a subset of resources and focus only on one part of the whole network. Moreover, the impact of E2E latency is not considered in the literature. As a result, existing methods may not perform well in practice.}

\textcolor{black}{In this paper, we propose a task offloading method that optimizes the energy consumption in terms of transmit power and computational resources under E2E latency constraints. Throughout the paper, the task offloading is referred to the process of transmit power allocation over radio links, task placement, i.e., selecting an executing server and its path, and computational resource allocation. The proposed method jointly allocates required transmit power to tasks, places each task in a proper NFV-enabled node, and allocates sufficient computational resources to the tasks. With this joint method, high latency of radio links caused by weak radio channels is compensated by a proper task placement and computational resource allocation. In contrast, high execution latency caused by limited computational resources is compensated by consuming more transmit power in radio links. As a result, more tasks are served, compared to a disjoint method wherein transmit power allocation is independent of task placement and computational resource allocation.}

\textcolor{black}{NFV enables a general-purpose server to execute various tasks without needing a specialized server for each task. Therefore, various tasks are dynamically offloaded to general-purpose executing servers in a network of NFV-enabled nodes instead of offloading each task to a respective specialized server. As a result, a task placement method is needed to determine an executing server and its route for each task. In spite of conventional routing methods that choose a route to a predetermined server, our task placement method jointly determines an executing server, the associated route to the executing server, and the required computational resources in the executing server.}

\textcolor{black}{We assume a deadline for offloading each task, i.e., sending the task from UE to the executing server and sending it back to UE performed under a maximum acceptable latency constraint. As a result, the sum of latencies in radio link, transport network links, and execution at the executing server is less than the maximum acceptable latency. The feasibility of this E2E offloading method depends on the available resources and location of executing servers in the network. For example, when the available transmit power is low, the radio link latency is large, which may violate E2E latency. In contrast, when the available computational resources at the executing server are low, the execution latency is large, which may also violate the E2E latency constraint. Therefore, our task offloading method includes a feasibility analysis that finds a set of feasible tasks.}

\textcolor{black}{The infeasibility of task offloading depends on the value of maximum acceptable latencies, i.e., lower values of maximum acceptable latencies result in a larger number of infeasible tasks and higher values result in a smaller number of infeasible tasks. Inspired by this fact and in contrast to the existing works, we add a non-negative variable to each maximum acceptable latency. Non-negative variables are zero for feasible tasks and are positive for infeasible tasks. Therefore, the set of feasible tasks is obtained by solving an optimization problem that minimizes the sum of non-negative variables, i.e., maximizes the number of feasible tasks.}

\textcolor{black}{Joint task offloading results in a non-convex problem due to coupling optimization variables. Moreover, the task placement is performed by obtaining binary variables, which makes the optimization problem further complicated. To deal with the optimization problem, we decouple transmit power allocation from task placement and computational resource allocation. Transmit power allocation is performed via the well-known convex-concave procedure (CCP) and a heuristic algorithm is proposed for task placement and computational resource allocation. CCP and the heuristic algorithm are alternatively applied until convergence. Note that both CCP and the heuristic algorithm preserve the monotonicity of convergence.}

\textcolor{black}{We also develop two baseline methods to evaluate the efficiency of our joint task offloading method. The first is a disjoint method in which transmit power allocation is performed independent of task placement. In doing so, the maximum acceptable E2E latency of each task is divided into a radio latency constraint and a non-radio latency constraint. We allocate transmit power under the radio latency constraint. Then, the task placement and computational resource allocation are performed under the non-radio latency constraint.}

\textcolor{black}{The second baseline method achieves a lower bound on the optimal solution of the joint task offloading optimization problem. The lower bound is achieved by relaxing some constraints in the optimization problem, which comes from leveraging practical assumptions such as orthogonality of wireless channels in large-scale antenna array systems. The optimal solution is then found by an exhaustive search over all feasible task placement candidates, finding the optimal computational resource allocation for each placement candidate, and choosing the placement candidate that results in the lowest objective value.}
\subsection{Contributions}
\textcolor{black}{In this paper, we develop an energy-efficient task offloading method that offloads the computational burden of a task from a UE to one of executing servers in a network of NFV-enabled nodes. In doing so, a task is offloaded by sending non-processed data of the task from the UE to a radio remote head (RRH) over a radio link, sending the data from the RRH toward the executing server through a transport network, and sending the processed data back from the executing server to UE. We assume that each task is offloaded under a respective deadline, i.e., the E2E latency of task offloading is less than the maximum acceptable latency of the task.}

\textcolor{black}{The main contributions and achievements of this paper are as follows:}
\begin{itemize}

	\item \textcolor{black}{We develop a joint task offloading method in a practical scenario, i.e., the proposed method allocates the transmit power, finds an executing server and the route to it, and allocates the computational resources in an energy-efficient manner. Moreover, the proposed method takes the E2E latency of task offloading into account. By the proposed method, the impact of weak radio links is compensated by placing the tasks in servers closer to UEs and consuming more computational resources. In contrast, limited computational resources are compensated by allocating more transmit power, resulting in an efficient and adaptive task offloading method.}

	\item\textcolor{black}{We propose a novel method for task placement and computational resource allocation. While the conventional routing methods find a route to a predetermined node, our proposed method jointly finds the executing server, its associated route, and the required computational resources in an energy-efficient manner.}

	\item\textcolor{black}{We find a lower bound on the objective function of the optimization problem in the feasibility analysis, i.e., an upper bound on the acceptance ratio of the proposed method. The lower bound is obtained by relaxing some of constraints in the optimization problem, performing an exhaustive search over all feasible task placement candidates, and finding the optimal computational resource allocation by utilizing Karush-Kuhn-Tucker conditions.}
	\item\textcolor{black}{Simulation results show that the proposed joint method outperforms its disjoint counterpart in terms of acceptance ratio. Moreover, the lower bound on the optimal solution is almost tight because the joint method attains the lower bound in practical scenarios. Specifically, the optimality gap of the joint method is less than $5\%$.}	
\end{itemize}
\subsection{Organization}
\textcolor{black}{The rest of the paper is organized as follows. Section \ref{system-model} introduces the system model. Section \ref{problem-formulation} describes the optimization problem formulation. In Section \ref{joint-ra}, we propose joint task offloading while disjoint task offloading and lower bound on optimal task offloading are proposed in Sections \ref{disjoint-ra} and \ref{opato_sec}, respectively. Simulation results are presented in Section \ref{simulation-results} and the paper is concluded in Section \ref{conclusion}.}
\subsection{{Notation}}
The {notation used in this paper} are given as follows. The vectors are denoted by bold lowercase symbols. Operators $\|\cdot\|$ and $|\cdot|$ are vector norm and absolute value of a scalar, respectively. $(\bf a)^{\rm T}$ is transpose of $\bf a$ and $[a]^+=\max(a,0)$. $\mathcal{A}\backslash\{a\}$ discards the element $a$ from the set $\mathcal{A}$. Finally, $\bf a\sim \mathcal{CN}({\bf 0},\Sigma)$ is a complex Gaussian vector with zero mean and covariance matrix $\bf \Sigma$.
\section{System Model}\label{system-model}
\textcolor{black}{The structure of the radio access network, channel model, and signaling scheme as well as NFV-enabled network, computational resources, and capacity of network links are described in this section.}
\vspace{-20 pt}
\subsection{Radio Access Network (RAN)}
We consider a centralized RAN architecture with a baseband unit (BBU) pool, which serves a set of $U$ RRHs, each equipped with $M$ antennas. {The} set of all users is denoted by $\mathcal K$. Each user is equipped with a single antenna and the total number of users is {$K=|\mathcal{K}|$}. The considered model is shown in Fig. \ref{sys_mod}. It is assumed that each RRH is connected to the BBU pool through a fronthaul link.\\
\indent We assume that each user requests a single task. Task $k$ is represented by a triplet $<L_k, D_k, T_k>$, where $L_k$ is the load of task $k$ (i.e., the required CPU cycles), $D_k$ is the data size of task $k$ (in terms of bits), and $T_k$ is the maximum acceptable latency of task $k$.\\
\begin{figure}[!t]
	\centering
	\includegraphics[width=6cm]{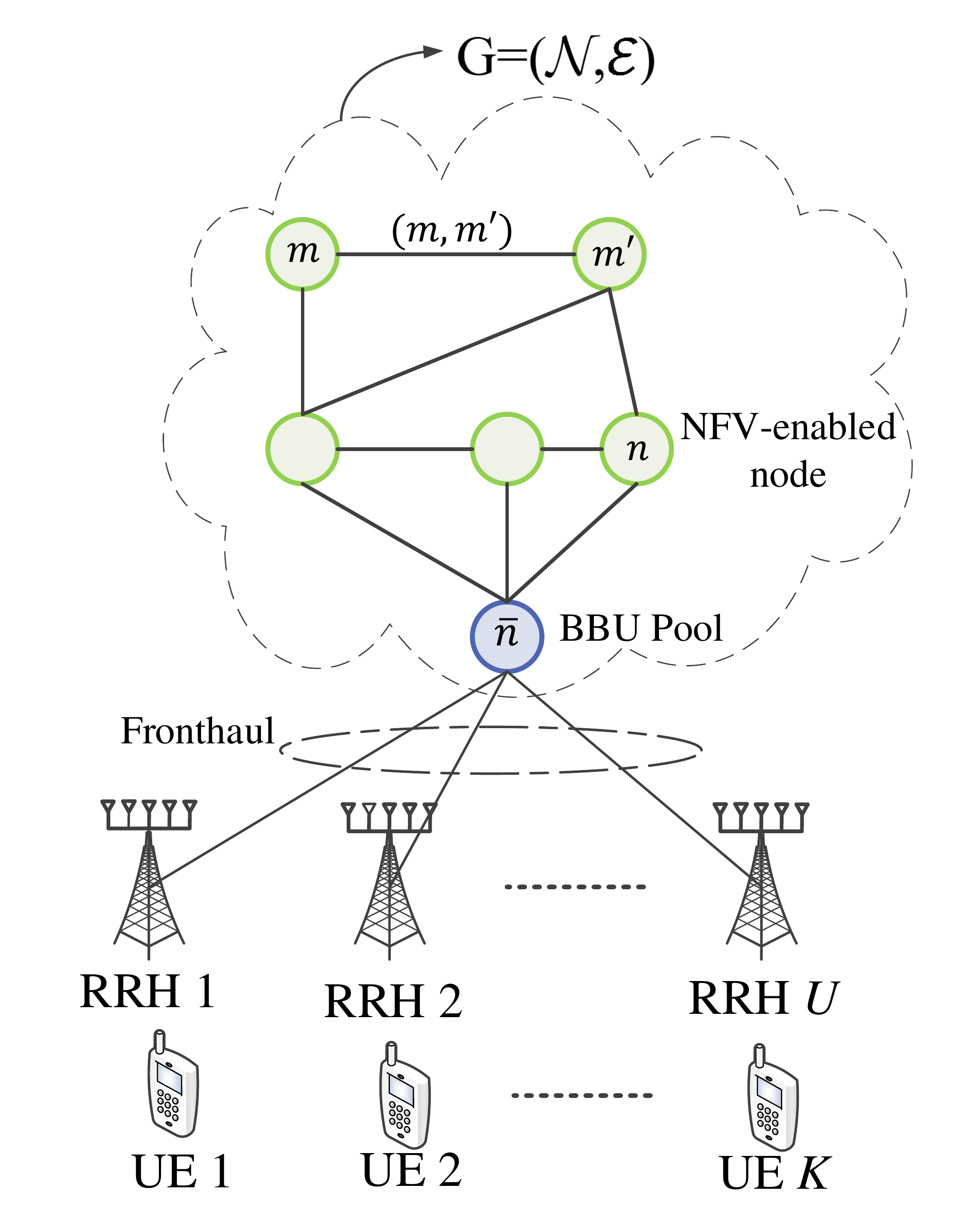}
	\caption{\small System model.}\label{sys_mod}
\end{figure}
\indent {Each UE transmits the non-processed data of its task to its serving RRH through a wireless link.}
We assume that each UE is served by a single RRH. The set of users served by RRH $u$ is $\mathcal{K}_u=\{k\in \mathcal{K}|J_u^k=1\}$ where $J_u^k$ is an indicator which equals 1 if UE $k$ is connected to RRH $u$ (0 otherwise). In this paper, we assume that the UE-RRH assignment is given and fixed. Focusing on the wireless link, we assume a narrow-band block fading channel model {\cite{xia2018power}}. The channel vector between UE $k$ and RRH $u$ is denoted by ${\bf h}_{u,k}$, where ${\bf h}_{u,k}=\sqrt{Q_{u,k}}\tilde{{\bf h}}_{u,k}$ in which $Q_{u,k}$ represents the path loss between RRH $u$ and UE $k$ and small-scale fading is modeled as $\tilde{{\bf h}}_{u,k}\sim \mathcal{CN}({\bf 0},{\bf I}_M)$. Similar to \cite{you2016energy, wang2017joint}, we assume that the channel state information (CSI) is constant over the offloading time. As we show through simulations, this assumption is non-restrictive in practical scenarios in sub-6 GHz bands.
 {UE $k$ transmits a symbol $x_k\sim \mathcal{CN}({\bf 0},1)$ with transmit power $\rho_k$ toward its serving RRH. The transmit power of UE $k$ is constrained to a maximum value, i.e., $\rho_k\le P^{\rm max}_k\quad \forall k$. The received signal vector at RRH $u$ is:
	\begin{equation}
	{\bf y}_u=\sum_{k\in \mathcal{K}}{\bf h}_{u,k}\sqrt{\rho_k}x_k, \quad \forall u.
	\end{equation}
	\textcolor{black}{We assume the maximum ratio combining (MRC) at RRHs because of its simplicity. Nevertheless, MRC is asymptotically optimal in massive MIMO systems \cite{lu2014overview}. Therefore, the combined signal is:}
	\begin{equation}
	{\bf z}_u={\bf F}_u^{\rm H} {\bf y}_u, \quad \forall u,
	\end{equation}
	where ${\bf F}_u=[{\bf f}_k], \forall k\in \mathcal{K}_u$ and ${\bf f}_k= \frac{{\bf h}_{u,k}}{\|{\bf h}_{u,k}\|}$. The estimated signal of UE $k$ is:
	\begin{align}
	\nonumber z_{k}=& {\bf f}_{k}^{\rm H}{\bf h}_{u,k}\sqrt{\rho_k}x_k+\sum_{j\in \mathcal{K}\backslash\{k\}}{\bf f}_{k}^{\rm H}{\bf h}_{u,j}\sqrt{\rho_k}x_k+ {\bf f}_{k}^{\rm H}{\bf n}_u, \quad \forall k\in \mathcal{K}_u,
	\end{align}
	where ${\bf n}_u\sim \mathcal{CN}({\bf 0},\sigma_{\rm n}^2{\bf I}_M)$ is the received noise vector at RRH $u$. Thus,} the signal to interference plus noise ratio (SINR) of UE $k$ is:
	\begin{equation}
	{\rm SINR}_k=\frac{\|{\bf h}_{u,k}\|^2 \rho_k}{\sum_{j\in \mathcal{K}\backslash\{k\}}\frac{|{\bf h}_{u,k}^{\rm H}{\bf h}_{u,j}|^2}{\|{\bf h}_{u,k}\|^2}\rho_j+\sigma_{\rm n}^2},\quad \forall k\in \mathcal{K}_u.
	\end{equation}
\textcolor{black}{Hence, the achievable data rate by UE $k$ is $R_{k}=W\log_2(1+{\rm SINR}_{k})\footnote{For wide-band channel model, the data rate of UE $k$ is the sum rate over all sub-carriers allocated to UE k.}$
bits per second (bps), where $W$ is the radio access network bandwidth. The radio transmission latency of task $k$ {in the uplink} is $T^{\rm tx}_k=\frac{D_k}{R_k}$\footnote{No buffering is assumed in the transport network routing. Therefore, transmission time of tasks' data over the transport network links is not taken into account.}. The sum of data rates of UEs served by RRH $u$ is less than the capacity of its fronthaul link, i.e., $\sum_{k\in \mathcal{K}_u}{R_k}\le B_{{\rm f},u}, \forall u$.
In this paper, similar to \cite{li2018deep,chen2015efficient}, and \cite{zhang2016energy}, we assume that the processed data size of task $k$ is small. Moreover, since the power budget of RRHs is generally large, the radio transmission latency in the downlink is assumed negligible.}
\subsection{NFV-enabled Network}
\textcolor{black}{The NFV-enabled network includes a graph $G=(\mathcal N, \mathcal E)$, where $\mathcal N$ and $\mathcal E$ are the set of nodes and edges (or links), respectively. A typical node in $\mathcal N$ is denoted by $n$ while the BBU pool is indicated by $\bar n$ (which is also a {node in} $\mathcal N$). The link between two nodes $m$ and $m'$ is denoted by $(m,m')$.
Each NFV-enabled node is comprised of an executing server and a routing device. The processing capacity (i.e., the maximum CPU cycles per second that are carried out) of the executing server in NFV-enabled node $n$  is indicated by $\Upsilon_n$. Moreover, the capacity of link $(m,m')$ is indicated by $B_{(m,m')}$ in terms of bps.}

\textcolor{black}{In this paper, we assume the full offloading scheme, i.e., the task of each user is completely executed in an executing server in the NFV-enabled network. Therefore, there is a need for placing each task to a proper executing server. A task placement decision consists of selecting an NFV-enabled node $n$ and its associated path from $\bar n$ to $n$ . We denote the $b^{\rm th}$ path between nodes $\bar{n}$ and $n$ as $p_{n}^{b}$ where $b\in\mathcal{B}_n=\{1 \cdots B_n\}$ and $B_n$ is the total number of paths between nodes $\bar{n}$ and $n$. \textcolor{black}{Note that a path between $\bar{n}$ and $n$ may comprise some intermediate nodes, which only forward the tasks' data via their routing devices and do not deliver the data to their executing servers.} We define decision variable $\xi_{p_{n}^{b}}^k$, which equals $1$ when task $k$ is offloaded to node $n$ and sent over path $p_{n}^{b}$ ($0$ otherwise). Each task is offloaded to one and only one node and path when we have:
\begin{equation}
\sum_{n\in\mathcal N}\sum_{b\in\mathcal{B}_n} \xi_{p_{n}^{b}}^k=1,\quad \forall k.
\end{equation}
Indicator $I^{(m,m')}_{p_{n}^{b}}$ determines whether a link contributes to a path. The indicator is equal to $1$ when link $(m,m')$ contributes to path $p_{n}^{b}$ {(0 otherwise)}.
Moreover, the set of all links that contribute to path $p_{n}^{b}$ is {$\mathcal{E}_{p_{n}^{b}}=\left\{(m,m')\in \mathcal{E}|I^{(m,m')}_{p_{n}^{b}}=1\right\}$}.
The amount of computational resources allocated to task $k$ is denoted by $\upsilon_k$ (in terms of CPU cycles per second). Note that the execution of each tasks is performed at only one node. To ensure that the allocated computational resources do not violate the processing capacity of that node, we should have{:}
\begin{equation}\label{node_constraint}
\sum_{k\in \mathcal{K}} \sum_{b\in \mathcal{B}_n}\upsilon_k \xi_{p_{n}^{b}}^k\le \Upsilon_n, \quad \forall n.
\end{equation}
Since the data of task $k$ is sent over the network with rate $R_k$, the aggregated rates of all tasks that pass a link should not exceed its capacity, which is guaranteed by the following constraint:
\begin{equation}\label{link_constraint}
\sum_{k\in \mathcal{K}}\sum_{n\in\mathcal N}\sum_{b\in \mathcal{B}_n}I^{(m,m')}_{p_{n}^{b}} \xi_{p_{n}^{b}}^k R_k\le B_{(m,m')}, \quad \forall (m,m')\in \mathcal E.
\end{equation}
The execution latency of task $k$ is $T^{\rm exe}_k=\frac{L_k}{\upsilon_k}$. The processed data of task $k$ is sent toward the BBU pool (i.e., node $\bar{n}$). In this paper, we assume the path of uplink and downlink are the same. Therefore, the overall propagation latency of task $k$ over the path $p_n^b$ is twice the propagation latency of path $p_n^b$. Thus, the propagation latency of task $k$ is {$T^{\rm prop}_k=2\sum_{ n\in\mathcal N} \sum_{b\in \mathcal{B}_n}\sum_{(m,m')\in \mathcal{E}_{p_{n}^{b}}} \xi_{p_{n}^{b}}^k \delta_{(m,m')}$}, where $\delta_{(m,m')}$ is the propagation latency of link $(m,m')$. Table \ref{table-notations} summarizes the notation used in the paper.}
\begin{table}[!h]
	\scriptsize
	\renewcommand{\arraystretch}{1.05}
	\centering
	\caption{\small Main Notation.}
	\label{table-notations}
	\begin{tabular}{| c| c| c|c|}			
		\hline
		\textbf{Notation}& \textbf{Definition} & \textbf{Notation} & \textbf{Definition}\\\hline		
		$U, M, K$&Number of RRHs, antennas and users & $W$ &Radio access network bandwidth\\ \hline
		$\mathcal{K, N, E}$ &Set of all users, nodes and links & $\mathcal{K}_u$ &Set of users served by RRH $u$\\ \hline
		$P^{\rm max}_k$ & {Power budget of UE $k$} & ${\bf h}_{u,k}$ &Channel vector between user $k$ and RRH $u$\\ \hline
		$L_k, D_k, T_k$ &\begin{tabular}{c} Load, data size and maximum\\acceptable latency of task $k$\end{tabular} & $\xi_{p_{n}^{b}}^k$ & \begin{tabular}{c} Decision variable for assignment of node $n$\\and its associated path $p_{n}^{b}$ to task $k$\end{tabular}\\ \hline
		$\Upsilon_n$ &Processing capacity of node $n$ & $B_{{\rm f},u}$ & Capacity of fronthaul link for RRH $u$\\ \hline
		\begin{tabular}{c}$B_{(m,m')},$\\ $\delta_{(m,m')}$ \end{tabular} &\begin{tabular}{c} Capacity and propagation latency\\of link $(m,m')$\end{tabular} & $\Lambda_n$ &\begin{tabular}{c}Computational energy efficiency coefficient\\of the node $n$\end{tabular}\\ \hline
		$p_{n}^{b}$ &$b^{\rm th}$ path between nodes $\bar{n}$ and $n$ & $\upsilon_k$ & Computational resources allocated to task $k$\\ \hline
		$\mathcal{B}_n$ &Set of paths between nodes $\bar{n}$ and $n$ & $\rho_k$ & Allocated transmit power to UE $k$\\ \hline
		$\mathcal{E}_{p_{n}^{b}}$ &Set of all links that contribute in path $p_{n}^{b}$ & $\alpha_k$ & Non-negative variable of task $k$\\ \hline
		$I^{(m,m')}_{p_{n}^{b}}$ & {\begin{tabular}{c} Indicator determining whether link $(m,m')$\\contributes in path $p_{n}^{b}$\end{tabular}} & $R_k$ &Data rate of task $k$ \\ \hline
		$J_u^k$ & \begin{tabular}{c} Indicator determining whether UE $k$\\is assigned to RRH $u$\end{tabular} & {$T^{\rm exe}_k$} & {Execution latency of task $k$}\\\hline
		{$T^{\rm tx}_k$} & {Radio transmission latency of task $k$}  & {$T^{\rm prop}_k$} & {Propagation latency of task $k$}\\\hline
	\end{tabular}
\end{table}
\section{Problem Formulation}\label{problem-formulation}
\textcolor{black}{In this section, we formulate the optimization problem of joint task offloading. Each task is offloaded under its E2E latency constraint and in an energy-efficient manner. The objective function is $\mathcal{E}({\boldsymbol \xi},{\boldsymbol \upsilon},{\boldsymbol \rho})= \sum_{k\in \mathcal{K}}\rho_k+\eta\sum_{n\in \mathcal N}\sum_{k\in \mathcal{K}}\sum_{b\in\mathcal{B}_n} \Lambda_n \xi_{p_{n}^{b}}^k {\upsilon_k}^3$,
where ${\boldsymbol \xi}=[\xi_{p_{1}^{1}}^1,\cdots,\xi_{p_{N}^{B_N}}^K]^{\rm T},{\boldsymbol \upsilon}=[\upsilon_1,\cdots,\upsilon_K]^{\rm T}$, and {${\boldsymbol \rho}=[\rho_1,\cdots,\rho_K]^{\rm T}$} are the vectors of all $\xi_{p_{n}^{b}}^k,\upsilon_k,$ and $\rho_k$, respectively; $\Lambda_n$ denotes the computational energy efficiency coefficient of node $n$ \cite{zhou2018computation}, and $\eta$ is a weight. Note that the first term in $\mathcal{E}$ is the transmit power consumption and the second term is the power consumption of executing servers. Therefore, the joint task offloading optimization problem is:
\begin{equation}\label{joint_op}
\begin{array}{ll}
\mathop{\min}\limits_{{\boldsymbol \xi},{\boldsymbol \upsilon},{\boldsymbol \rho}}& \mathcal{E}({\boldsymbol \xi},{\boldsymbol \upsilon},{\boldsymbol \rho})\\
\text{s.t.}&  \text{C1:}\quad T^{\rm exe}_k+T^{\rm prop}_k+T^{\rm tx}_k\le T_k, \quad \forall k,\\
&\text{C2:}\quad {\sum_{k\in \mathcal{K}} \sum_{b\in \mathcal{B}_n}\upsilon_k \xi_{p_{n}^{b}}^k\le \Upsilon_n, \quad \forall n,}\\
&\text{C3:}\quad {\sum_{k\in \mathcal{K}}\sum_{n\in\mathcal N}\sum_{b\in \mathcal{B}_n}I^{(m,m')}_{p_{n}^{b}} \xi_{p_{n}^{b}}^k R_k\le B_{(m,m')}, \quad \forall (m,m')\in \mathcal E,}\\
&\text{C4:}\quad {\sum_{k\in \mathcal{K}_u}{R_k}\le B_{{\rm f},u},\quad \forall u,}\\
&\text{C5:}\quad \rho_{k}\le P^{\rm max}_k,\quad \forall k,\\
&\text{C6:}\quad \sum_{n\in\mathcal N}\sum_{b\in\mathcal{B}_n} \xi_{p_{n}^{b}}^k=1,\quad \forall k,\\
\end{array}
\end{equation}
{under variables: ${\boldsymbol \xi}\in \{0,1\},{\boldsymbol \upsilon}\ge 0,{\boldsymbol \rho}\ge 0$.}
Constraint C1 guarantees that the maximum acceptable latency of task offloading is respected. Constraints C2 and C3 make sure that all tasks are offloaded without violation in processing capacity of nodes and capacity of links, respectively. Constraint C4 {ensures the} capacity of fronthaul links. Constraint C5 guarantees the power budget of UEs while constraint C6 makes sure that each task is offloaded to only one node and path.}

\section{Joint Task Offloading (JTO)}\label{joint-ra}
In this section, we solve optimization problem \eqref{joint_op}. \textcolor{black}{This problem is non-convex due to integer variable $\boldsymbol \xi$ and coupling variables in C1-C4. Therefore, we solve \eqref{joint_op} by decoupling transmit power allocation from task placement and computational resource allocation. In doing so, transmit power is allocated given task placement and allocated computational resources. Then, we perform task placement and computational resource allocation having allocated transmit powers. The proposed approach needs a feasible initialization. However, it is likely for constraint C1 to make \eqref{joint_op} infeasible. Thus, we need to propose a feasibility analysis to find a feasible subset of tasks.}
\vspace{-14pt}
\subsection{Feasibility Analysis}\label{feasibility_section}
\textcolor{black}{The feasible set of \eqref{joint_op} is extended by adding a non-negative variable $\alpha_k$ to the maximum acceptable latency of task $k$. Thus, the feasibility problem is constructed by replacing the objective function of \eqref{joint_op} with the sum of non-negative variables, i.e., $\sum_{k=1}^K \alpha_k$ \cite{Chinneck2008feasibility}. The constraints which cause infeasibility are found by solving the feasibility problem and determining the constraints with positive values of non-negative variables.} The feasibility problem is:
\begin{equation}\label{joint_elas}
\begin{array}{ll}
\mathop{\min}\limits_{{\boldsymbol \xi},{\boldsymbol \upsilon},{\boldsymbol \rho},{\boldsymbol \alpha}}& \sum_{k\in \mathcal{K}}\alpha_k\\
\text{s.t.}&  \text{C1-a:}\quad T^{\rm exe}_k+T^{\rm prop}_k+T^{\rm tx}_k \le T_k+\alpha_k, \quad \forall k\in\mathcal{K}\\
& \text{C2-{C6},}
\end{array}
\end{equation}
under variables: ${\boldsymbol \xi}\in \{0,1\},{\boldsymbol \upsilon}\ge 0,{\boldsymbol \rho}\ge 0,{\boldsymbol \alpha}\ge 0$. \textcolor{black}{Note that non-negative variables are added only to C1 because when C1 is eliminated, the optimization problem (\ref{joint_op}) is always feasible. Thus, we seek for the tasks whose maximum acceptable latencies are violated and eliminate them one by one until a subset of feasible tasks remains. The solution to (\ref{joint_elas}) not only provides the infeasible constraints but also determines the level of infeasibility, i.e., constraints with larger values of non-negative variables need more resources to become feasible. Therefore, we first eliminate the tasks with larger values of non-negative variables.}

Without loss of equivalence, we add the summation of inequalities in C1-a as a new constraint C7. Therefore, optimization problem \eqref{joint_elas} is restated as:
\begin{equation}\label{joint_elas_sum}
\begin{array}{ll}
\mathop{\min}\limits_{{{\boldsymbol \xi},{\boldsymbol \upsilon},{\boldsymbol \rho},\boldsymbol \alpha}}& \sum_{k\in \mathcal{K}}\alpha_k\\
\text{s.t.}&  \text{C1-a:}\quad T^{\rm exe}_k+T^{\rm prop}_k+T^{\rm tx}_k \le T_k + \alpha_k, \quad \forall k\\
& \text{C2-{C6},}\\
& \text{{C7}:}\sum_{k\in \mathcal{K}}\left(T^{\rm exe}_k+T^{\rm prop}_k+T^{\rm tx}_k -T_k\right) \le \sum_{k\in \mathcal{K}} \alpha_k. \quad
\end{array}
\end{equation}
\noindent This optimization problem is equivalent with:

\begin{equation}\label{joint_elas_obj}
\begin{array}{ll}
\mathop{\min}\limits_{{{\boldsymbol \xi},{\boldsymbol \upsilon},{\boldsymbol \rho},\boldsymbol \alpha}}& \sum_{k\in \mathcal{K}}\left(T^{\rm exe}_k+T^{\rm prop}_k+T^{\rm tx}_k \right)\\
\text{s.t.}&  \text{C1-a, and C2-C6,}\\
\end{array}
\end{equation}
in which  the term $\sum_{k\in\mathcal{K}}T_k$ is removed from the objective because it is constant. \textcolor{black}{We solve \eqref{joint_elas_obj} by decoupling transmit power allocation from task placement and coumputational resource allocation. In other words, we solve \eqref{joint_elas_obj} under variables $\boldsymbol \upsilon,\boldsymbol \xi,\boldsymbol \alpha$, having $\boldsymbol \rho$ fixed and vice versa. To perform task placement and computational resource allocation, we need an initial $\boldsymbol \rho = {\boldsymbol \rho}^0$ that satisfies C3 and C4, which are satisfied with a small value of $R_k$, i.e., small values of $\rho_k$. Next, we solve the following optimization problem:
\begin{equation}\label{asm}
\begin{array}{ll}
\mathop{\min}\limits_{{\boldsymbol \alpha},{\boldsymbol \upsilon},{\boldsymbol \xi}}& \sum_{k\in \mathcal{K}}\left(T^{\rm exe}_k+T^{\rm prop}_k \right)\\
\text{s.t.}&  \text{C1-a, C2,C3, and C6}\\
\end{array}
\end{equation}
by a heuristic method. As in Algorithm \ref{heuristic}, we find variables $\boldsymbol \xi$ and $\boldsymbol \upsilon$ that minimize the objective of \eqref{asm}. Then, we set the non-negative variables for a feasible C1. In doing so, for task $k$, we calculate the amount of unused computational resources at all nodes, formally expressed as $\tilde{\Upsilon}_n^k=\Upsilon_n-\sum_{ j\in \mathcal{K}\backslash \{k\}} \sum_{b\in\mathcal{B}_{n}}\upsilon_j \xi_{p_{n}^{b}}^j$. Morevoer, the available capacity of link $(m,m')$ is $\tilde{B}_{(m,m')}^k=B_{(m,m')}-\sum_{j\in\mathcal{K}\backslash \{k\}}\sum_{n\in\mathcal N}\sum_{b\in\mathcal{B}_{n}}I^{(m,m')}_{p_{n}^b}  \xi_{p_{n}^{b}}^jR_j$. A task is placed in node $n$ only when there is a feasible path between $\bar{n}$ and $n$, i.e., a path with sufficient capacity in all of its links. The set of all such nodes is $\mathcal{N}_k$. For each $n \in \mathcal{N}_k$, we calculate $T_k^{\rm exe} + T_k^{\rm prop}$ when $\upsilon_k = \tilde{\Upsilon}_n^k$. Next, we find the node and feasible path that give the smallest $T_k^{\rm exe} + T_k^{\rm prop}$, denoted by $n^\star$ and $b^\star$, respectively.
Note that from C1, the sufficient computational resources allocated to task $k$ is $\upsilon_{\rm temp} = \frac{L_k}{T_k-T_k^{\rm tx}-T_k^{\rm prop}}$. When $\tilde{\Upsilon}_{n^\star}^k \ge \upsilon_{\rm temp}$, C1 is satisfied by setting $\upsilon_k = \upsilon_{\rm temp}$ and $\alpha_k = 0$. Otherwise, we set $\upsilon_k = \tilde{\Upsilon}_{n^\star}^k$ and $\alpha_k = T_k^{\rm tx} + T_k^{\rm exe} + T_k^{\rm prop}-T_k$. Next, the available computational resources of nodes and available capacity of links are updated and this process is repeated for all of tasks. Note that we begin with tasks that require lower resources, i.e., tasks with lower values of $T_k$.}
\begin{algorithm}
	\DontPrintSemicolon
	\KwInput{${\boldsymbol \rho}$}
	\textbf{{sort $\boldsymbol \alpha$:} $T_{[1]}\le T_{[2]}\le \cdots T_{[|\mathcal{K}|]}$}\\
	\For{$k=[1]:[|\mathcal{K}|]$}
	{
		\nonl \% Find a feasible node according to capacity of paths terminated at that node \\
		$\mathcal{N}_k=\{n\in \mathcal{N}| \exists b: R_k\le \tilde{B}_{(m,m')}^k \forall (m,m')\in \mathcal{E}_{p_{n}^b}\}$\\
		$\tilde{\Upsilon}_n^k=\Upsilon_n-\sum_{ j\in \mathcal{K}\backslash \{k\}} \sum_{b\in\mathcal{B}_{n}}\upsilon_j \xi_{p_{n}^{b}}^j,\quad \forall n$\\
		\nonl \%Find the best node and its associated path\\
		$(n^\star,{b^\star})=\argmin \limits_{n\in\mathcal{N}^k,b\in\mathcal{B}_n} T^{\rm exe}(\tilde{\Upsilon}_n^k)+T^{\rm prop}({p_n^b})$\\
		set $\quad \xi_{p_{n^\star}^{b^\star}}^{k} = 1$ and $\xi_{p_{n}^{b}}^{k} = 0, \forall (n,b) \neq (n^\star,{b^\star})$\\
		\nonl \% Update computational resource allocation and non-negative variables\\
		$\upsilon_{\rm temp} = \frac{L_k}{T_k-T_k^{\rm tx}-T_k^{\rm prop}}$\\
		\If{$\tilde{\Upsilon}_{n^\star}^k \ge \upsilon_{\rm temp}$}
		{
			$\text{set}\quad \upsilon_k^\star=\upsilon_{\rm temp}$ and $ \alpha_k^\star=0$
		}
		\Else
		{
			$\upsilon_k^\star=\tilde{\Upsilon}_{n^\star}^k $ and $\alpha_k^\star = T_k^{\rm tx} + T_k^{\rm exe} + T_k^{\rm prop}({p_{n^\star}^{b^\star}})-T_k$
		}		
	}
	\KwOutput{${\boldsymbol \alpha^\star,\boldsymbol\xi^\star,\boldsymbol \upsilon^\star
		}$}
	\caption{Heuristic Algorithm for Solving \eqref{asm}.}
	\label{heuristic}
\end{algorithm}

\textcolor{black}{After solving \eqref{asm}, we allocate the transmit power by solving:
\begin{equation}\label{elas_p}
\begin{array}{ll}
\mathop{\min}\limits_{{\boldsymbol \rho}}& \sum_{k=1}^{K}T_k^{\rm tx}\\
\text{s.t.}&\text{C1-a, and C3-C5}.
\end{array}
\end{equation}
Note that in the heuristic method, we have $T_k^{\rm tx} + T_k^{\rm exe} + T_k^{\rm prop} = T_k + \alpha_k $. As a result, any feasible solution to \eqref{elas_p} does not increase $T_k^{\rm tx}$ because \eqref{elas_p} is infeasible for larger values of $T_k^{\rm tx}$. Hence, replacing \eqref{elas_p} with its feasibility problem counterpart does not impact the decreasing monotonicity of the objective function in \eqref{joint_elas_obj}. The feasibility problem of \eqref{elas_p} is:
\begin{equation}\label{elas_p0}
\begin{array}{ll}
\text{find}& \boldsymbol \rho\\
\text{s.t.}&\text{C1-a, and C3-{C5}}.
\end{array}
\end{equation}
In solving \eqref{elas_p0}, we note that the constraints C1-a, C3  and C4 are non-convex. Therefore, we need to find a convexified version of \eqref{elas_p0}.We use CCP \cite{boyd2016ccp} to convexify (\ref{elas_p0}). In doing so, we reformulate C1-a as:
\begin{equation}\label{concave_R_k}
R_k \ge \frac{D_k}{T_k+\alpha_k-T^{{\rm prop},i}_k-T^{{\rm exe},i}_k}.
\end{equation}
where $T^{{\rm exe},i}_k$ and $T^{{\rm prop},i}_k$ are the execution latency and propagation latency of task $k$ obtained from the heuristic method in $i^{\rm th}$ iteration, respectively. In order to convexify (\ref{concave_R_k}), we need a concave approximation of $R_k$ with respect to $\boldsymbol \rho$. The rate $R_k$ is:
\begin{equation}
\label{R_k_first}
R_k=W\log_2\bigg(\frac{\sum_{j\in \mathcal{K}}\frac{|{\bf h}_{u,k}^{\rm H}{\bf h}_{u,j}|^2}{|{\bf h}_{u,j}|^2}\rho_j+\sigma^2_{\rm n}}{\sum_{j\in \mathcal{K}\backslash \{k\}}\frac{|{\bf h}_{u,k}^{\rm H}{\bf h}_{u,j}|^2}{|{\bf h}_{u,j}|^2}\rho_j+\sigma^2_{\rm n}}\bigg), \quad k \in \mathcal{K}_u,
\end{equation}
{which is equivalent to:}
\begin{equation}\label{R_k}
\hspace{-5pt}
R_k=\underbrace{W\log_2\bigg(\sum_{u=1}^{U}\sum_{j\in \mathcal{K}_{u}}\frac{|{\bf h}_{u,k}^{\rm H}{\bf h}_{u,j}|^2}{|{\bf h}_{u,j}|^2}\rho_j+\sigma^2_{\rm n}\bigg)}_{h_k({\boldsymbol \rho})}-\underbrace{W\log_2\bigg(\sum_{u=1}^{U}\sum_{j\in \mathcal{K}_{u}\backslash\{k\}}\frac{|{\bf h}_{u,k}^{\rm H}{\bf h}_{u,j}|^2}{|{\bf h}_{u,j}|^2}\rho_j+\sigma^2_{\rm n}\bigg)}_{g_k({\boldsymbol \rho})}.
\end{equation}
Both $h_k({\boldsymbol \rho})$ and $g_k({\boldsymbol \rho})$ are concave functions of $\boldsymbol \rho$. Thus, we need to find a linear approximation of $g_k({\boldsymbol \rho})$, which is
$\hat{g}_k({\boldsymbol \rho})= g_k({\boldsymbol \rho}^0)+\nabla g_k({\boldsymbol \rho}^0)^{\rm T} ({\boldsymbol \rho}-{\boldsymbol \rho}^0)$, where:
\begin{equation}
[\nabla g_k({\boldsymbol \rho})]_i=\left\{
\begin{array}{ll}
\frac{W\sum_{u=1}^UI_u^i\frac{|{\bf h}_{u,k}^{\rm H}{\bf h}_{u,i}|^2}{|{\bf h}_{u,i}|^2}}{\ln(2)\left(\sum_{u=1}^{U}\sum_{j\in \mathcal{K}_{u}\backslash\{k\}}\frac{|{\bf h}_{u,k}^{\rm H}{\bf h}_{u,j}|^2}{|{\bf h}_{u,j}|^2}\rho_j+\sigma^2_{\rm n}\right)}, i\in \mathcal{K}\backslash\{k\},\\
0, \qquad i=k.
\end{array}
\right.
\end{equation}
Next, we focus on the convex approximation of C3 and C4. To this aim, we find a convex approximation of $R_k$, which is found by linear approximation of $h_k({\boldsymbol \rho})$.
Thus, we have
$\hat{h}_k({\boldsymbol \rho})= h_k({\boldsymbol \rho}^0)+\nabla h_k({\boldsymbol \rho}^0)^{\rm T} ({\boldsymbol \rho}-{\boldsymbol \rho}^0)$, where:
\begin{align}
[\nabla h_k({\boldsymbol \rho})]_i=& \frac{W\sum_{u=1}^{U}I_u^i\frac{|{\bf h}_{u,k}^{\rm H}{\bf h}_{u,i}|^2}{|{\bf h}_{u,i}|^2}}{\ln(2)\left(\sum_{u=1}^{U}\sum_{j\in \mathcal{K}_{u}}\frac{|{\bf h}_{u,k}^{\rm H}{\bf h}_{u,j}|^2}{|{\bf h}_{u,j}|^2}\rho_j+\sigma^2_{\rm n}\right)}, \quad i\in \mathcal{K},
\end{align}
Finally, the convexified version of \eqref{elas_p} is:
\begin{equation}\label{elas_p_convex}
\begin{array}{ll}
\text{find}& \boldsymbol \rho\\
\text{s.t.}& \text{C1-b} \quad h_k({\boldsymbol \rho})- \hat{g}_k({\boldsymbol \rho})\ge \frac{D_k}{T_k+\alpha_k-T^{{\rm prop},i}_k-T^{{\rm exe},i}_k}, \quad \forall k\in \mathcal{K}\\
&\text{C3-a:}\quad\sum_{k\in \mathcal{K}}\sum_{n\in\mathcal N}\sum_{b\in\mathcal{B}_n}I^{(m,m')}_{p_{n}^{b}} \xi_{p_{n}^b}^k \left(\hat{h}_k({\boldsymbol \rho})- g_k({\boldsymbol \rho})\right)\le B_{(m,m')}, \quad \forall (m,m')\in \mathcal E\\
&\text{C4-a:} \quad\sum_{k\in \mathcal{K}_u}{\left(\hat{h}_k({\boldsymbol \rho})- g_k({\boldsymbol \rho})\right)}\le B_{{\rm f},u}, \forall u\\
&\text{C5:} \quad \rho_k\le P^{\rm max}_k, \forall k,
\end{array}
\end{equation}
under variable: $\boldsymbol\rho \ge 0$. Note that, based on CCP, any feasible solution of \eqref{elas_p_convex} is also feasible in \eqref{elas_p0} \cite{boyd2016ccp}. The feasibility problem (\ref{joint_elas}) is solved by alternatively solving \eqref{asm} and \eqref{elas_p_convex}. Then, we reject the task that makes \eqref{joint_op} infeasible. According to Algorithm \ref{feasiblity_alg}, we find the value of the maximum non-negative variable. If the value is positive, its associated task is rejected, the set of served tasks is updated, and \eqref{joint_elas} is solved for updated set of tasks. This procedure continues until all non-negative variables are zero. The output of Algorithm \ref{feasiblity_alg} is feasible subset of tasks $\mathcal{K}^\star$ as well as the solution of \eqref{joint_elas}, i.e., the values of ${\boldsymbol \xi}^{\rm ini}, {\boldsymbol \rho}^{\rm ini}, \text{and } {\boldsymbol \upsilon}^{\rm ini}$, which are utilized as initialization for solving (\ref{joint_op}).}
\begin{algorithm}
	\DontPrintSemicolon
	{\nonl {\textbf{Initialize:}{$\quad \mathcal{K}=\{1,\cdots,K\}, \quad \boldsymbol \xi ={\bf 0}, \quad{\boldsymbol \rho}^0:$ very small}}}\\
	\Repeat{$\sum_{k\in \mathcal{K}}\alpha_k=0$}
	{
	 $i = 0$\\
		\Repeat{$\sum_{k\in \mathcal{K}}\alpha_k^{i}-\sum_{k\in \mathcal{K}}\alpha_k^{i+1}\le \epsilon$ or $i\ge I_{\rm max}$}
		{
			{\nonl {\% Allocate transmit power, computational resources, and place the tasks}}\\
			Solve \eqref{asm} via Algorithm \ref{heuristic} and return ${\boldsymbol \upsilon}^{i+1}$, ${\boldsymbol \xi}^{i+1}$, and ${\boldsymbol \alpha}^{i+1}$\\
			Solve (\ref{elas_p_convex}) and return ${\boldsymbol \rho}^{i+1}$\\
			$i=i+1$\\
		}
		\nonl \% Discard the infeasible task \\	
		$k^\star=\argmax\limits_{k\in \mathcal{K}} \alpha_k$\\
		\If{$\alpha_{k^\star}> 0$}
		{
			$\mathcal{K}=\mathcal{K}\backslash \{k^\star\}$
		}	
	}
	\KwOutput{${\boldsymbol \xi}^{\rm ini}={\boldsymbol \xi}^{i+1}, {\boldsymbol \rho}^{\rm ini}={\boldsymbol \rho}^{i+1}, {\boldsymbol \upsilon}^{\rm ini}={\boldsymbol \upsilon}^{i+1}$, and $\mathcal{K}^\star = \mathcal{K}$}
	\caption{JTO Feasibility Analysis for Solving \eqref{joint_elas}.}
	\label{feasiblity_alg}
\end{algorithm}
\vspace{-5pt}
\subsection{Optimization}
\textcolor{black}{Given the feasible solution ${\boldsymbol \xi}^{\rm ini}, {\boldsymbol \rho}^{\rm ini}, {\boldsymbol \upsilon}^{\rm ini}$, and the set of accepted tasks $\mathcal{K}^{\star}$, we seek for the solution of (\ref{joint_op}). Similar to Algorithm \ref{feasiblity_alg}, we decouple power allocation from task placement and coumputational resource allocation. The optimization problemof task placement and coumputational resource allocation is:
\begin{equation}\label{op_c}
\begin{array}{ll}
\mathop{\min}\limits_{{\boldsymbol \upsilon},{\boldsymbol \xi}}& \sum_{n\in \mathcal N}\sum_{k\in \mathcal{K}}\sum_{b\in\mathcal{B}_n} \Lambda_n \xi_{p_{n}^b}^{k} {\upsilon_k}^3\\
\text{s.t.}&  \text{C1-C3, and C6},\\
\end{array}
\end{equation}
which is non-convex. Note that the objective of \eqref{op_c} is an increasing function of $\upsilon_k$ and allocating lower computational resources to task $k$ decreases the power consumption. But, allocating lower computational resources increases execution latency and violates the E2E latency constraint. As a result, we need to find nodes with smaller propagation latency to compensate for increased execution latency. In doing so, we find a subset of nodes with smaller propagation latency than the current executing server and with sufficient capacity of links terminating at that nodes. This set of nodes is $\mathcal{N}'_k=\{n'\in \mathcal{N}| \exists b': R_k\le \tilde{B}_{(m,m')}^k \forall (m,m')\in \mathcal{E}_{p_{n'}^{b'}} \text{ and } T_{k}^{\rm prop}(p_{n'}^{b'}) \le T_{k}^{\rm prop}(p_{n}^{b})\}$, where we assume task $k$ is previously placed through path $p_{n}^{b}$. For each node in $\mathcal{N}'_k$, we calculate the minimum computational resources that satisfy the E2E latency constraint, i.e., $\upsilon_{\rm temp} = \frac{L_k}{T_k-T_k^{\rm tx}-T_k^{\rm prop}(p_{n'}^{b'})}$. When $\tilde{\Upsilon}_{n'}^k \ge \upsilon_{\rm temp}$ and $\Lambda_{n'}\upsilon_{\rm temp}^3 \le \Lambda_{n}\upsilon_k^3$, we ensure that task placement through $p_{n'}^{b'}$ and computational resource allocation $\upsilon_{\rm temp}$ are feasible and result in lower power consumption. Therefore, we set $\upsilon_k = \upsilon_{\rm temp}$. Otherwise, we reinstate $\upsilon_k$ for task $k$. Algorithm \ref{heuristic_op} begins with the tasks with larger power consumption, i.e., $\Lambda_{n_k}\upsilon_k^3$, where $n_k$ denotes the executing server of task $k$. This procedure is repeated for all accepted tasks.}
\begin{algorithm}
	\DontPrintSemicolon
	\KwInput{${\boldsymbol \xi}^{\rm ini}, {\boldsymbol \rho}^{\rm ini}, {\boldsymbol \upsilon}^{\rm ini}$}
	\textbf{sort:} $\Lambda_{[1]}\upsilon_{[1]}^3\le \Lambda_{[2]}\upsilon_{[2]}^3 \le \cdots \Lambda_{[K]}\upsilon_{[K]}^3$\\
	\For{$k=[1]:[|\mathcal{K}|]$}
	{
		\nonl \% Find a feasible node according to capacity of paths terminated at that node \\
		$\mathcal{N}'_k=\{n'\in \mathcal{N}| \exists b': R_k\le \tilde{B}_{(m,m')}^k, \forall (m,m')\in \mathcal{E}_{p_{n'}^{b'}} \text{ and } T_{k}^{\rm prop}(p_{n'}^{b'}) \le T_{k}^{\rm prop}(p_{n}^{b})\}$\\
		\For{$ n' \in \mathcal{N}'_k$}
	{
		$\upsilon_{\rm temp} = \frac{L_k}{T_k-T_k^{\rm tx}-T_k^{\rm prop}(p_{n'}^{b'})}$\\
		$\tilde{\Upsilon}_{n'}^k=\Upsilon_{n'}-\sum_{ j\in \mathcal{K}\backslash \{k\}} \sum_{b\in\mathcal{B}_{n'}}\upsilon_j \xi_{p_{n'}^{b}}^j$\\
			\If{$ \upsilon_{\rm temp} \ge \tilde{\Upsilon}_{n'}^k$ {\rm and } $\Lambda_{n'}\upsilon_{\rm temp}^3 \le \Lambda_{n^\star}\upsilon_k^3$}
		{
			set $\quad \upsilon_k^\star=\upsilon_{\rm temp}$\\
			set $\quad \xi_{p_{n'}^{b'}}^{k\star} = 1$ and $\xi_{p_{n}^{b}}^{k\star} = 0, \forall (n,b) \neq (n',{b'})$\\
			\bf{break}
		}	
	}
	}
	\KwOutput{${\boldsymbol\xi^\star,\boldsymbol \upsilon^\star}$}
	\caption{Heuristic Algorithm for Solving \eqref{op_c}.}
	\label{heuristic_op}
\end{algorithm}

\vspace{-15 pt}
The sub-problem of transmit power allocation, after convexification, is:
\begin{equation}\label{op_p}
\begin{array}{ll}
\mathop{\min}\limits_{{\boldsymbol \rho}}& \sum_{k\in \mathcal{K}}\rho_k\\
\text{s.t.}& \text{C1-c:} \quad h_k({\boldsymbol \rho})- \hat{g}_k({\boldsymbol \rho})\ge \frac{D_k}{T_k-T^{{\rm prop},i}_k-T^{{\rm exe},i}_k}, \forall k\in \mathcal{K}\\
&\text{C3-a, C4-a, and C5.}\\
\end{array}
\end{equation}
Based on CCP in Algorithm \ref{ccp_alg} and starting from ${\boldsymbol \rho}^0 = {\boldsymbol \rho}^{\rm ini}$, an iterative solution of (\ref{op_p}) provides a sub-optimal transmit power allocation.
\begin{algorithm}

	\DontPrintSemicolon	
	\KwInput{ ${\boldsymbol \rho}^0 = {\boldsymbol \rho}^{\rm ini}$, $i=0$, $\epsilon=10^{-3}$, $I_{\rm max}^{\rm \rho}=10^2$}	
	\Repeat{$\sum_{k\in \mathcal{K}}\rho_k^{i}-\sum_{k\in \mathcal{K}}\rho_k^{i+1}\le \epsilon$ \rm{ or } $i\ge I_{\rm max}^{\rm \rho}$}
	{
		\nonl \% Allocate power to users\\
		
		Solve (\ref{op_p}) and return ${\boldsymbol \rho}^{i+1}$	\\
		$i=i+1$\\
	}
	\KwOutput{${\boldsymbol \rho}^{\star} = {\boldsymbol \rho}^{i+1}$}
	\caption{Power Allocation in JTO.}
	\label{ccp_alg}
\end{algorithm}
Finally, optimization problem \eqref{joint_op} is solved via Algorithm \ref{optimization_alg}, which alternatively solves optimization problem \eqref{asm} via Algorithm \ref{heuristic_op} and optimization problem \eqref{op_p} via Algorithm \ref{ccp_alg}.

{\textcolor{black}{From the implementation point of view, BBU is responsible for gathering the required information, performing resource allocation, and sending the decisions to the associated entities. Specifically, in JTO, BBU needs to acquire CSI of UEs and the available computational resources in the NFV-enabled nodes. CSI of each UE is estimated at its serving RRH and is forwarded through fronthaul links with negligible latency. In addition, each NFV-enabled node sends the available computational resources to the BBU through the transport network. After performing JTO, BBU transmits the value of allocated powers to RRHs. Next, BBU forwards the received data of tasks as well as the obtained computational resources to associated NFV-enabled nodes based on task placement variables. In the downlink, the processed data of tasks are sent to BBU, which in turn transmits UEs processed data to their serving RRH.}
\vspace{-15 pt}
\begin{algorithm}
	\DontPrintSemicolon	
	\KwInput{${\boldsymbol \xi}^0={\boldsymbol \xi}^{\rm ini}, {\boldsymbol \rho}^0={\boldsymbol \rho}^{\rm ini}, {\boldsymbol \upsilon}^0={\boldsymbol \upsilon}^{\rm ini}, \mathcal{K}^\star$, $i=0$}
	\Repeat{$\mathcal{E}({\boldsymbol \xi}^{i},{\boldsymbol \upsilon}^{i},{\boldsymbol \rho}^{i})-\mathcal{E}({\boldsymbol \xi}^{i+1},{\boldsymbol \upsilon}^{i+1},{\boldsymbol \rho}^{i+1})\le \epsilon$ or $i\ge I_{\rm max}$}
	{
		\nonl \% Place the tasks and allocate the computational resources\\
		Solve \eqref{op_c} via Algorithm \ref{heuristic_op} and return ${\boldsymbol \upsilon}^{i+1}$ and ${\boldsymbol \xi}^{i+1}$\\
		\nonl \% Allocate the transmit power\\
		Solve (\ref{op_p}) via CCP in Algorithm \ref{ccp_alg} and return ${\boldsymbol \rho}^{i+1}$\\
		$i=i+1$\\	
	}
	\KwOutput{${\boldsymbol \xi}^\star, {\boldsymbol \rho}^\star, {\boldsymbol \upsilon}^\star$}
	\caption{JTO Optimization Algorithm for Solving \eqref{joint_op}.}
	\label{optimization_alg}
\end{algorithm}
\vspace{-35 pt}
\subsection{Convergence analysis}
In this subsection, we prove the convergence of Algorithms \ref{feasiblity_alg} and \ref{optimization_alg}.
\begin{thm}
Algorithm \ref{feasiblity_alg} is convergent.
\end{thm}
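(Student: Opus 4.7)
The plan is to establish convergence at two nested levels: the inner alternating loop (between Algorithm \ref{heuristic} and the CCP-based solution of \eqref{elas_p_convex}) and the outer task-rejection loop. I will argue that each level either terminates in finitely many iterations or produces a monotone bounded sequence.

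First, I would address the inner loop. Let $S^i = \sum_{k\in\mathcal{K}} \alpha_k^i$ denote the value of the feasibility objective after iteration $i$. Observe that $S^i \ge 0$ for all $i$ since each $\alpha_k \ge 0$ by the feasibility analysis domain. I would then argue monotonicity: solving \eqref{asm} via Algorithm \ref{heuristic} given the current $\boldsymbol\rho^i$ produces $\boldsymbol\xi^{i+1}, \boldsymbol\upsilon^{i+1}, \boldsymbol\alpha^{i+1}$ that minimize (heuristically) the total execution-plus-propagation latency while keeping the previous $\boldsymbol\rho^i$ feasible, so $S^{i+1} \le S^i$. Similarly, solving \eqref{elas_p_convex} via CCP produces $\boldsymbol\rho^{i+1}$ that, by the standard CCP descent property \cite{boyd2016ccp} and the feasibility of the linearization at $\boldsymbol\rho^i$, does not increase the radio transmission latency term and hence does not increase $S$. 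Thus $\{S^i\}$ is a non-increasing sequence bounded below by zero and therefore converges; combined with the hard cap $i \ge I_{\rm max}$, the inner loop always terminates.

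Next, I would handle the outer loop. At each outer iteration, one of two mutually exclusive events occurs: either the stopping condition $\sum_{k\in\mathcal{K}}\alpha_k = 0$ holds (and the algorithm halts), or the task $k^\star = \argmax_k \alpha_k$ satisfies $\alpha_{k^\star} > 0$ and is removed from $\mathcal{K}$, strictly decreasing $|\mathcal{K}|$ by one. Since $|\mathcal{K}|$ is a non-negative integer starting at $K < \infty$, after at most $K$ outer iterations either the stopping condition is triggered or $\mathcal{K} = \emptyset$, in which case trivially $\sum_{k\in\mathcal{K}}\alpha_k = 0$ and the loop exits. Hence the outer loop also terminates in finitely many steps.

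The main subtlety will be justifying the monotone non-increase of $S^i$ under the heuristic step in Algorithm \ref{heuristic}, since it is not a provably optimal solver of \eqref{asm}. I would handle this by noting that the heuristic, when re-invoked with unchanged inputs, returns the same assignment, and when invoked after a CCP update that strictly lowers $T_k^{\rm tx}$, the per-task residual $\alpha_k = [T_k^{\rm tx} + T_k^{\rm exe} + T_k^{\rm prop} - T_k]^+$ can only stay the same or shrink for each task whose server assignment remains best; if a reassignment occurs, it does so precisely because it lowers $T_k^{\rm exe} + T_k^{\rm prop}$, again not increasing $\alpha_k$. A small safeguard (e.g.\ retaining the previous iterate when the new one would raise $S$) would make this rigorous without altering the algorithm's behavior in practice. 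Combining the two levels gives global termination of Algorithm \ref{feasiblity_alg}.
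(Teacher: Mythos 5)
Your proposal is correct and its core argument for the inner loop is essentially the paper's: the sequence $S^i=\sum_{k\in\mathcal{K}}\alpha_k^i$ is non-increasing because the task-placement step does not increase $\sum_k\bigl(T^{\rm exe}_k+T^{\rm prop}_k\bigr)$ and the power step does not increase $\sum_k T^{\rm tx}_k$, and $S^i$ is bounded below by zero. Where you genuinely diverge is the outer loop: you terminate it by the elementary observation that each pass either halts or strictly decreases $|\mathcal{K}|$, so at most $K$ passes occur; the paper instead argues that removing the task with the largest $\alpha_k$ enlarges the feasible sets of \eqref{asm} and \eqref{elas_p0} and hence does not increase the objective of \eqref{joint_elas}. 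Your cardinality argument is simpler and fully sufficient for termination, while the paper's buys the extra (but not strictly needed) property that the objective stays monotone across task eliminations. You also explicitly flag the one real soft spot that the paper's proof passes over silently: Algorithm \ref{heuristic} is a greedy, order-dependent heuristic rather than an exact minimizer of \eqref{asm}, so after a power update changes the rates $R_k$ (and hence the feasible node sets $\mathcal{N}_k$ and link budgets $\tilde{B}^k_{(m,m')}$), the claim that the re-run cannot increase $\sum_k\bigl(T^{\rm exe}_k+T^{\rm prop}_k\bigr)$ requires either the per-task monotonicity argument you sketch or the retain-the-previous-iterate safeguard you propose; with that safeguard your proof is airtight, and without it both your proof and the paper's rest on the same unproven monotonicity of the heuristic.
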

\begin{proof}
\textcolor{black}{We show that the objective value of \eqref{joint_elas}, i.e., $\sum_{k\in \mathcal{K}}\alpha_k$, is non-increasing in each step of Algorithm \ref{feasiblity_alg} and since the objective value is lower bounded by zero, Algorithm \ref{feasiblity_alg} is convergent. In $i^{\rm th}$ iteration of Algorithm \ref{feasiblity_alg}, Algorithm \ref{heuristic} sets $\alpha_k^{i+1}$ either equal to $0$ when E2E latency of task $k$ is guaranteed or equal to $T_k^{\rm tx} + T_k^{\rm exe} + T_k^{\rm prop}-T_k$ when E2E latency is larger than its maximum acceptable value. Therefore, we have $\alpha_k^{i+1}= [T^{\rm exe}_k +T^{\rm prop}_k +T^{\rm tx}_k -T_k]^+$.  Hence, we need to show that $\sum_{k\in \mathcal{K}}(T^{\rm tx}_k+T^{\rm exe}_k+T^{\rm prop}_k)$ does not increase after $i^{\rm th}$ iteration. Algorithm \ref{heuristic} affloads task $k$ so that $T^{\rm exe}_k+T^{\rm prop}_k$ in the objective of \eqref{asm} is minimized (line 5  in Algorithm \ref{heuristic}). As a result, Algorithm \ref{heuristic} does not increase the objective value of \eqref{asm}, i.e., $\sum_{k\in\mathcal{K}}(T^{\rm prop}_k (\boldsymbol\xi^{i+1})+T^{\rm exe}_k (\boldsymbol\upsilon^{i+1}))\le \sum_{k\in\mathcal{K}}(T^{\rm prop}_k (\boldsymbol\xi^i)+T^{\rm exe}_k (\boldsymbol\upsilon^i))$. Moreover, as discussed in subsection \ref{feasibility_section}, Algorithm \ref{heuristic} makes C1-a active, i.e., $T^{\rm tx}_k (\boldsymbol\rho^i)=T_k+ \alpha_k^{i+1}-T^{\rm exe}_k (\boldsymbol\upsilon^{i+1})-T^{\rm prop}_k (\boldsymbol\xi^{i+1})$, and therefore, any feasible solution to \eqref{elas_p0} does not increase the objective vlaue of \eqref{elas_p}, i.e.,  $\sum_{k\in \mathcal{K}}T^{\rm tx}_k(\boldsymbol\rho^{i+1})\le \sum_{k\in \mathcal{K}}T^{\rm tx}_k(\boldsymbol\rho^{i})$, which gives $\sum_{k\in\mathcal{K}}(T^{\rm exe}_k (\boldsymbol\upsilon^{i+1})+T^{\rm prop}_k (\boldsymbol\xi^{i+1})+T^{\rm tx}_k (\boldsymbol\rho^{i+1})) \le \sum_{k\in\mathcal{K}}(T^{\rm exe}_k (\boldsymbol\upsilon^i)+T^{\rm prop}_k (\boldsymbol\xi^i)+T^{\rm tx}_k (\boldsymbol\rho^i))$. As a result, we have $\sum_{k\in \mathcal{K}}\alpha_k^{i+1}\le \sum_{k\in \mathcal{K}}\alpha_k^i$, that is, Algorithm \ref{feasiblity_alg} is convergent.}

\textcolor{black}{Note that Algorithm \ref{feasiblity_alg} may eliminate the task with maximum non-negative variable. This elimination is equivalent to removing the constraints of \eqref{joint_elas} associated with the eliminated task. Note that, eliminating a task increases the available capacity of links in transport network and available computational resources in NFV-enabled nodes. As a result, a search space of Algorithm \ref{heuristic} increases, which may result in lower propagation and execution latencies. Moreover, eliminating a task extends the feasible set of \eqref{elas_p0}. Therefore, data rate of users may increase, which in turn may decrease $\sum_{k\in \mathcal{K}}T^{\rm tx}_k$. As a result, eliminating the task with maximum non-negative variable does not increase the objective of \eqref{joint_elas}.}
\end{proof}
\begin{thm}
Algorithm \ref{optimization_alg} is convergent.
\end{thm}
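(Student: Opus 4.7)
The plan is to mirror the argument used for Theorem 1: establish that the objective $\mathcal{E}(\boldsymbol\xi,\boldsymbol\upsilon,\boldsymbol\rho)$ is monotonically non-increasing along the outer iterations of Algorithm \ref{optimization_alg}, and then observe that it is bounded below by $0$, which together imply convergence. For bookkeeping I would split the objective as $\mathcal{E} = \mathcal{E}_{\rm tx}(\boldsymbol\rho) + \mathcal{E}_{\rm comp}(\boldsymbol\xi,\boldsymbol\upsilon)$, with $\mathcal{E}_{\rm tx}=\sum_{k}\rho_k$ and $\mathcal{E}_{\rm comp}=\eta\sum_{n,k,b}\Lambda_n \xi_{p_{n}^{b}}^k \upsilon_k^3$. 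The base-case feasibility of $(\boldsymbol\xi^{0},\boldsymbol\upsilon^{0},\boldsymbol\rho^{0})$ for \eqref{joint_op} on $\mathcal{K}^\star$ is guaranteed by the output of Algorithm \ref{feasiblity_alg} (since $\boldsymbol\alpha^{\rm ini}={\bf 0}$ on $\mathcal{K}^\star$), so I may assume feasibility of the incoming iterate at every step.

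In the computational sub-step, Algorithm \ref{heuristic_op} only re-places task $k$ along a new path $p_{n'}^{b'}$ and re-assigns $\upsilon_k\leftarrow \upsilon_{\rm temp}$ when the explicit guard $\Lambda_{n'}\upsilon_{\rm temp}^3 \le \Lambda_{n^\star}\upsilon_k^3$ holds. Because each task is touched at most once in the sweep and its per-task contribution to $\mathcal{E}_{\rm comp}$ can only decrease or stay the same, I obtain $\mathcal{E}_{\rm comp}(\boldsymbol\xi^{i+1},\boldsymbol\upsilon^{i+1}) \le \mathcal{E}_{\rm comp}(\boldsymbol\xi^{i},\boldsymbol\upsilon^{i})$. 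Feasibility is preserved: the restriction to $\mathcal{N}'_k$ (paths with residual link capacity $\ge R_k$ under the current $\boldsymbol\rho^i$ and propagation latency not exceeding the incumbent one) together with the checks $\tilde{\Upsilon}_{n'}^k\ge \upsilon_{\rm temp}$ and the definition $\upsilon_{\rm temp}=L_k/(T_k-T_k^{\rm tx}-T_k^{\rm prop}(p_{n'}^{b'}))$ ensure that C1--C3 and C6 remain satisfied after each individual update, and $\boldsymbol\rho^i$ is kept fixed so $\mathcal{E}_{\rm tx}$ and C4--C5 are unaffected.

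For the power sub-step, the iterate $(\boldsymbol\xi^{i+1},\boldsymbol\upsilon^{i+1},\boldsymbol\rho^{i})$ is feasible for \eqref{op_p}, so Algorithm \ref{ccp_alg} can be initialized at $\boldsymbol\rho^{i}$. Standard CCP arguments then apply: since $\hat{g}_k$ is the first-order Taylor expansion of the concave $g_k$ at the current iterate, $\hat{g}_k(\boldsymbol\rho)\ge g_k(\boldsymbol\rho)$, and therefore $h_k(\boldsymbol\rho)-\hat{g}_k(\boldsymbol\rho)\le R_k(\boldsymbol\rho)$; analogously $\hat{h}_k$ majorizes $h_k$, so C3-a and C4-a inner-approximate C3 and C4. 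Hence the convexified subproblem is a restriction of \eqref{op_p}, its optima are feasible for \eqref{op_p}, and the linear objective $\sum_k \rho_k$ produces a non-increasing sequence of values across CCP iterations (the current iterate is always feasible for the next linearization). This gives $\mathcal{E}_{\rm tx}(\boldsymbol\rho^{i+1})\le \mathcal{E}_{\rm tx}(\boldsymbol\rho^{i})$ with $\mathcal{E}_{\rm comp}$ unchanged, so combining the two sub-steps yields $\mathcal{E}(\boldsymbol\xi^{i+1},\boldsymbol\upsilon^{i+1},\boldsymbol\rho^{i+1})\le \mathcal{E}(\boldsymbol\xi^{i},\boldsymbol\upsilon^{i},\boldsymbol\rho^{i})$. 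Since $\mathcal{E}\ge 0$, the monotone bounded sequence converges.

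The main obstacle I anticipate is the sequential coupling inside Algorithm \ref{heuristic_op}: updating one task's $(\xi,\upsilon)$ changes the residual budgets $\tilde{\Upsilon}_n^k$ and $\tilde{B}_{(m,m')}^k$ seen by subsequent tasks in the sweep, so one must verify that the per-task decrease of $\mathcal{E}_{\rm comp}$ is not negated by a later infeasibility that forces reverting to $(\boldsymbol\xi^{i},\boldsymbol\upsilon^{i})$. I would handle this by using the same residual-capacity bookkeeping as in the proof of Theorem 1 and by noting that the \textbf{break} in line 10 of Algorithm \ref{heuristic_op} keeps the prior assignment whenever the guard fails, so the worst case for any given task is no change—keeping $\mathcal{E}_{\rm comp}$ monotone regardless of the order in which tasks are visited.
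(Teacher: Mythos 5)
Your proposal is correct and follows essentially the same route as the paper's proof: show that the task-placement/computational sub-step (via the guard in line 7 of Algorithm \ref{heuristic_op}) and the CCP power sub-step each leave the objective of \eqref{joint_op} non-increasing, then invoke boundedness below by zero. You simply spell out in more detail what the paper delegates to the guard condition and to the cited CCP reference \cite{boyd2016ccp}, including the inner-approximation/feasibility bookkeeping, which the paper leaves implicit.
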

\begin{proof}
\textcolor{black}{Algorithm \ref{optimization_alg} solves \eqref{joint_op} by alternating minimization of \eqref{op_c} and \eqref{op_p}. Therefore, we need to show that Algorithm \ref{heuristic_op} (which solves \eqref{op_c}) and Algorithm \ref{ccp_alg} (which solves \eqref{op_p}) do not increase the objective value of \eqref{joint_op}. According to line 7 of Algorithm \ref{heuristic_op}, computational resource allocation and task placement do not increase the objective value of \eqref{op_c}. In addition, based on \cite{boyd2016ccp}, convergence of Algorithm \ref{ccp_alg} is guaranteed  and CCP does not increase the objective of \eqref{op_p}. As a result, the objective value of \eqref{joint_op} is non-increasing in each iteration, and since ${\Psi}({\boldsymbol \xi}, {\boldsymbol \upsilon}, {\boldsymbol \rho})$ is lower bounded by zero, Algorithm \ref{optimization_alg} is convergent.}
\end{proof}
\vspace{-20 pt}
\subsection{Summary of JTO}
\textcolor{black}{Herein we summarize JTO. We obtain a set of feasible tasks by solving \eqref{joint_elas}. In doing so, we decouple the power allocation from task placement and computational resource allocation, which are performed by solving \eqref{elas_p0} and Algorithm \eqref{heuristic}, respectively. Then, we solve \eqref{joint_op} for feasible tasks via Algorithm \ref{optimization_alg}, which includes the alternating minimization of \eqref{op_c} and \eqref{op_p} via Algorithm \ref{heuristic_op} and Algorithm \ref{ccp_alg}, respectively.}
\vspace{-10 pt}
\subsection{Computational Complexity (CC) Analysis} \label{computational-complexity}
\textcolor{black}{In this section, we analyze CC of the proposed algorithms. CC of JTO is equal to CC of feasibility analysis in Algorithm \ref{feasiblity_alg} and CC of optimization in Algorithm \ref{optimization_alg}. Algorithm \ref{feasiblity_alg} includes two nested while loops and CC of the inner loop is equal to CC of Algorithm \ref{heuristic} and CC of solving \eqref{elas_p_convex}. CC of Algorithm \ref{heuristic} depends on the required computations for calculation of the parameters in Algorithm \ref{heuristic}, which are provided in Table \ref{compute_ac} where $B$ is the maximum number of the paths between any node and $\bar{n}$ whereas $E$ is the total number of the edges in network graph $G$. Hence, CC of Algorithm \ref{heuristic} is {$CC^1 = \mathcal{O} (K^2 N B  E)$}.
    \begin{table}[h]
        \centering
        \caption{\small CC of obtaining parameters in Algorithm \ref{heuristic}.}
	\label{compute_ac}
	\begin{tabular}{ c c c c}
		\hline
		\textbf{Parameter}& \textbf{CC}&\textbf{Parameter}& \textbf{CC}\\\hline
		$\mathcal{N}^k$ & $N\times E$ & $\tilde{B}_{(m,m')}^k$ & $N \times B \times (K-1)$ \\ \hline
		$\tilde{\Upsilon}_n^k$ &$B\times (K-1)$ & $(n^\star, b^\star)$ & $N \times B $ \\ \hline
	\end{tabular}
    \end{table}
}

\textcolor{black}{
Problem \eqref{elas_p_convex} is solved via CVX, which exploits IPM for finding the optimal solution \cite{cvx}. Based on \cite{mokari2015limited} and \cite{boyd2004convex}, the required number of iterations for IPM to converge is $\frac{\log N_{\rm c}}{t^0 \varrho\log \varsigma}$, where $N_{\rm c} = 2K+U+E$ is the total number of constraints in \eqref{elas_p_convex}, $t^0$ is the initial point for approximation of the barrier function, $\varrho$ is the desired accuracy of convergence and $0<\varsigma\ll 1$ is used for updating the stepsize of the barrier function accuracy. Note that the inner loop in Algorithm \ref{feasiblity_alg} is repeated at most $K I_{\rm max}$ times. As a result, CC of Algorithm \ref{feasiblity_alg} is
\begin{equation}
CC^{2}=K\times I_{\rm max}\left(\frac{\log (2K+U+E)}{t^0 \varrho\log \varsigma}+\mathcal{O} (K^2 N B E)\right).
\end{equation}
Algorithm \ref{optimization_alg} includes alternating minimization of \eqref{op_c} and \eqref{op_p} via Algorithm \ref{heuristic_op} and Algorithm \ref{ccp_alg}, respectively. Note that CC of Algorithm \ref{heuristic_op} is in the same order as that of Algorithm \ref{heuristic}, i.e., $CC^3 = CC^1$. Algorithm \ref{ccp_alg} solves \eqref{op_p} at most $I_{\rm max}^{\rho}$ times and CC of solving \eqref{op_p} is equal to CC of \eqref{elas_p_convex}. As a result, CC of Algorithm \ref{ccp_alg} is
\begin{equation}
CC^4 = I_{\rm max}^{\rho} \frac{\log (2K+U+E)}{t^0 \varrho\log \varsigma}.
\end{equation}
Algorithms \ref{heuristic_op} and \ref{ccp_alg} are executed at most $I_{\rm max}$ times. Therefore, CC of Algorithm \ref{optimization_alg} is $CC^5 = I_{\rm max}\left(CC^3 + CC^4\right)$. Finally, CC of JTO is $CC^{\rm JTO} = CC^2 +CC^5$. Note that  Algorithm \ref{heuristic}, Algorithm \ref{heuristic_op}, and IPM are of polynomial time complexity. Therfore, JTO is also of polynomial time complexity.
}

\section{Disjoint Task Offloading (DTO)} \label{disjoint-ra}
\textcolor{black}{In DTO, transmit power allocation is independent of task placement and computational resource allocation. The transmit power is allocated under a radio latency constraint, i.e., $T_k^{\rm tx}\le T_k^{\rm RAN}$. Then, the task placement and computational resource allocation are performed so that $T_k^{\rm prop} + T_k^{\rm exe} \le T_k - T_k^{\rm RAN}$. The convexified sub-problem of the transmit power allocation is:}
\begin{equation}\label{pow_op_disjoint}
\begin{array}{ll}
\mathop{\min}\limits_{{\boldsymbol \rho}}& \sum_{k\in \mathcal{K}}\rho_k\\
\text{s.t.}& \text{C1-d:} \quad h_k({\boldsymbol \rho})- \hat{g}_k({\boldsymbol \rho})\ge \frac{D_k}{T_k^{\rm RAN}}, \quad \forall k\in \mathcal{K}\\
&\text{C4-a, and C5}.
\end{array}
\end{equation}
According to discussion on discussion on \eqref{joint_op}, a feasibility analysis is needed for \eqref{pow_op_disjoint}. Similar to JTO, the feasibility problem of (\ref{pow_op_disjoint}) is:
\begin{equation}\label{pow_elas_disjoint}
\begin{array}{ll}
\text{find} &\boldsymbol \rho\\
\text{s.t.}& \text{C1-e:} \quad h_k({\boldsymbol \rho})- \hat{g}_k({\boldsymbol \rho})\ge \frac{D_k}{T_k^{\rm RAN}+\alpha_k}, \quad \forall k\in \mathcal{K}\\
&\text{C4-a, and C5},\\
\end{array}
\end{equation}
which is solved via CVX. Next, the non-negative variables are updated as $\alpha_k=[T^{\rm tx}_k-T^{\rm RAN}_k]^+$ and the task with maximum non-negative variable is eliminated. This procedure is repeated until a feasible subset of tasks for transmit power allocation is obtained. After this step, \eqref{pow_op_disjoint} is solved with the feasible subset of tasks. The transmit power allocation phase of DTO is provided in Algorithm \ref{pow_alg_disjoint}.
\begin{algorithm}
	\DontPrintSemicolon
		\KwInput
	{
		$\mathcal{K}=\{1,\cdots,K\}, {\boldsymbol \alpha}^0:$ very large,
		${\boldsymbol \rho}^0:$ very small, $T_k^{\rm RAN}=(0,T_k)$
	}
	\Repeat{$\sum_{k\in \mathcal{K}}\alpha_k=0$}
	{
	$i = 0$\\
		\Repeat{$\sum_{k\in \mathcal{K}}\alpha_k^{i}-\sum_{k\in \mathcal{K}}\alpha_k^{i+1}\le \epsilon$ or $i\ge I_{\rm max}$}
		{			
			\nonl \% Allocate the transmit power to users \\
			Solve (\ref{pow_elas_disjoint}) via CVX and set ${\boldsymbol \rho}^{i+1}={\boldsymbol \rho}^\star$\\
			\nonl \% Update the non-negative variables\\
			$\alpha^{i+1}_k=[T_k^{\rm tx} - T_k^{\rm RAN}]^+,\quad \forall k \in \mathcal{K}$\\
			$i=i+1$
		}
		$k^\star=\argmax_{k\in \mathcal{K}} \alpha_k $\\
	\nonl \% Discard the infeasible task\\
		\If{$\alpha_{k^\star}> 0$}
		{
			$\mathcal{K}=\mathcal{K}\backslash \{k^\star\}$
		}	
	}
	\nonl \% Minimize the transmit power\\
	Solve (\ref{pow_op_disjoint}) via CCP in Algorithm \ref{ccp_alg} and return ${\boldsymbol \rho}^\star$ \\
	\KwOutput{${\boldsymbol \rho}^{\rm \star}, \mathcal{K}^{\rm RAN} = \mathcal{K}$}
	\caption{DTO Transmit Power Allocation.}
	\label{pow_alg_disjoint}
\end{algorithm}

\textcolor{black}{Having obtained transmit power $\boldsymbol \rho$, task placement and computational resource allocation are performed, whose associated sub-problem is:}
\begin{equation}\label{op_c_disjoint}
\begin{array}{ll}
\mathop{\min}\limits_{{\boldsymbol \xi},{\boldsymbol \upsilon}}& \sum_{n\in \mathcal N}\sum_{k\in \mathcal{K}}\sum_{b\in\mathcal{B}_n} \Lambda_n \xi_{p_{n}^b}^{k} {\upsilon_k}^3\\
\text{s.t.}& \text{C1-f:} \quad T_k^{\rm prop} + T_k^{\rm exe} \le  T_k- T_k^{\rm RAN}, \quad \forall k\in \mathcal{K},\\
&  \text{C2, C3, and C6.}
\end{array}
\end{equation}
\textcolor{black}{A feasibility analysis is also needed for solving \eqref{op_c_disjoint}. Similar to the transmit power allocation, we introduce a set of non-negative variables. The resulting sub-problem is similar to \eqref{asm} by replacing C1-a with C1-f, which is solved by algorithm \ref{heuristic}. After obtaining a set of feasible tasks, \eqref{op_c_disjoint} is solved via Algorithm \ref{heuristic_op}. The feasibility analysis and optimization of DTO is provided in Algorithm \ref{task_offloading_alg_disjoint}.}
\begin{algorithm}
	\DontPrintSemicolon	
	\KwInput
	{
		$\mathcal{K}^{\rm RAN}$, ${\boldsymbol \xi} = \bf 0$\\
	}
	\Repeat{$\sum_{k\in \mathcal{K}}\alpha_k=0$}
	{
		$i = 0$\\
		\Repeat{$\sum_{k\in \mathcal{K}}\alpha_k^{i}-\sum_{k\in \mathcal{K}}\alpha_k^{i+1}\le \epsilon$ or $i\ge I_{\rm max}$}
		{
			\nonl \% Allocate transmit power, computational resources, and place the tasks\\
			Solve \eqref{op_c_disjoint} via Algorithm \ref{heuristic} given ${\boldsymbol \upsilon}^{i}$, ${\boldsymbol \xi}^{i}$, and ${\boldsymbol \alpha}^{i}$ and return ${\boldsymbol \upsilon}^{i+1}$, ${\boldsymbol \xi}^{i+1}$, and ${\boldsymbol \alpha}^{i+1}$\\
			$i=i+1$\\
		}
	\nonl \% Find the task with maximum non-negative variable\\
		$k^\star=\argmax_{k\in \mathcal{K}} \alpha_k$\\
		\If{$\alpha_{k^\star}> 0$}
		{
			\nonl \% Discard the infeasible task\\
			$\mathcal{K}=\mathcal{K}\backslash \{k^\star\}$
		}	
	}
	$i=0$\\
	\Repeat{$\Psi({\boldsymbol \xi}^{i},{\boldsymbol \upsilon}^{i},{\boldsymbol \rho}^{\star})-\Psi({\boldsymbol \xi}^{i+1},{\boldsymbol \upsilon}^{i+1},{\boldsymbol \rho}^\star)\le \epsilon$ or $i\ge I_{\rm max}$}
	{
		\nonl \% Allocate computational resources and place the tasks\\
		Given ${\boldsymbol \upsilon}^{i}$ and ${\boldsymbol \xi}^{i}$, solve \eqref{op_c_disjoint} via Algorithm \ref{heuristic_op} and return ${\boldsymbol \upsilon}^{i+1}$ and ${\boldsymbol \xi}^{i+1}$\\
	}
	\KwOutput{${\boldsymbol \xi}^{\rm \star}, {\boldsymbol \upsilon}^{\rm \star}$}
	\caption{DTO Computational Resource Allocation and Task Placement.}
	\label{task_offloading_alg_disjoint}
\end{algorithm}
\subsection{CC of DTO}
\textcolor{black}{CC of DTO is equal to CC of Algorithms \ref{pow_alg_disjoint} and \ref{task_offloading_alg_disjoint}. Algorithm \ref{pow_alg_disjoint} includes two nested loops, wherein the inner loop includes solving \eqref{pow_elas_disjoint} via CVX, whose CC is $\frac{\log (2K+U)}{t^0 \varrho\log \varsigma}$. Moreover, Algorithm \ref{pow_alg_disjoint} minimizes the transmit power via CCP, whose CC is derived in subsection \ref{computational-complexity}. As a result, CC of Algorithm \ref{pow_alg_disjoint} is $CC^6 = (K I_{\rm max}+ I_{\rm max}^{\rho}) \frac{\log (2K+U)}{t^0 \varrho\log \varsigma}$.}

\textcolor{black}{Algorithm \ref{task_offloading_alg_disjoint} includes two nested while loops, in which Algorithm \ref{heuristic} is executed. Based on CC of Algorithm \ref{heuristic} derived in subsection \ref{computational-complexity}, CC of feasibility analysis in Algorithm \ref{task_offloading_alg_disjoint} is $K I_{\rm max}\mathcal{O} (K^2 N B E)$.  Algorithm \ref{task_offloading_alg_disjoint} also includes executing Algorithm \ref{heuristic_op} at most $I_{\rm max}$ times. As a result, CC of Algorithm \ref{task_offloading_alg_disjoint} is $CC^7 = (K+1) I_{\rm max}\mathcal{O} (K^2 N B E)$. Based on above, CC of DTO is $CC^{\rm DTO} = CC^6 + CC^7$. Note that although $CC^{\rm DTO}$ is larger than $CC^{\rm JTO}$, they are in the same order of complexity. The difference of CCs comes from the fact that CCP is performed at most $I_{\rm max}$ times in JTO and only once in DTO.}
\section{Lower Bound on Optimal Solution (LTO)}\label{opato_sec}
Since the optimization problem \eqref{joint_elas} is non-convex, without loss of the optimality, we make some assumptions to resolve the non-convexity of \eqref{joint_elas}. First, we note that it is very likely for the fiber-optic links to have sufficient capacity for carrying the traffic of UEs, which is the case for frontahul links and any wired link in the transport network. As a result, we relax the constraints C3 and C4 from \eqref{joint_elas}. Note that {the relaxation of} C3 and C4 extends the feasible set of \eqref{joint_elas}, resulting in a lower bound on the optimal solution to \eqref{joint_elas}. In addition, with a large number of antenna elements at RRHs, the channel vectors between different RRHs and a specific user are approximately orthogonal, i.e., $|{\bf h}_{u,k}^{\rm H}{\bf h}_{u,j}|\approx 0$ for all $j\neq k$ \cite{lu2014overview}. Therefore, the interference in wireless channels is negligible and \eqref{R_k_first} becomes:
\begin{equation}
\label{R_k_convex}
R_k=W\log_2\left(1+\frac{|{\bf h}_{u,k}|^2}{\sigma^2_{\rm n}}\rho_k\right), \quad k \in \mathcal{K}_u.
\end{equation}
{The elimination of the} interference increases $R_k$ with the same amount of power allocated to each UE, which again results in a lower bound on the optimal solution to \eqref{joint_elas}. Based on the fact that $\mathop{\min}\limits_{{\boldsymbol \alpha}, {\boldsymbol \xi},{\boldsymbol \upsilon},{\boldsymbol \rho}} \sum_{k\in \mathcal{K}}\alpha_k=\mathop{\min}\limits_{{\boldsymbol \alpha}, {\boldsymbol \xi},{\boldsymbol \upsilon}}\left(\mathop{\min}\limits_{\boldsymbol \rho} \sum_{k\in \mathcal{K}}\alpha_k\right)$, the optimal power allocation is the solution to:
\begin{equation}\label{pa}
\begin{array}{ll}
\mathop{\min}\limits_{\boldsymbol \rho}& \sum_{k\in \mathcal{K}}\alpha_k\\
\text{s.t.}&  \text{C1:}\quad T^{\rm exe}_k+T^{\rm prop}_k+T^{\rm tx}_k\le T_k+\alpha_k, \quad \forall k,\\
&\text{C5:}\quad \rho_{k}\le P^{\rm max}_k,\quad \forall k.\\
\end{array}
\end{equation}
The data rate in \eqref{R_k_convex} removes the cross-coupling impact of the allocated power to different users. Hence, without loss of optimality, \eqref{pa} is solved for each $\rho_k$ independently. The associated power allocation problem is:
\begin{equation}\label{pa_k}
\begin{array}{ll}
\mathop{\min}\limits_{{\rho_k}}& T^{\rm tx}_k\\
\text{s.t.}&  \text{C1:}\quad T^{\rm exe}_k+T^{\rm prop}_k+T^{\rm tx}_k\le T_k+\alpha_k, \quad \forall k,\\
&\text{C5:}\quad \rho_{k}\le P^{\rm max}_k,\quad \forall k,\\
\end{array}
\end{equation}
in which $\alpha_k$ in the objective is replaced with $T^{\rm tx}_k$. Note that minimizing $T^{\rm tx}_k$ is equivalent to maximizing $\frac{R_k}{D_k}$. Since $R_k$ in \eqref{R_k_convex} is increasing with $\rho_k$, the optimal solution of \eqref{pa_k} is $\rho_k^\star=P^{\rm max}_k$. Note that feasibility of C1 is ensured by optimizing other variables.\\
\indent Next, we deal with the binary optimization variable $\boldsymbol \xi$. We propose an exhaustive search over all possible values of $\boldsymbol \xi$ to avoid any performance loss due to non-convexity of \eqref{joint_elas}, stemmed from binary $\boldsymbol \xi$. The number of all possible combinations of task placement decisions equals ${|\mathcal{B}|}^{|\mathcal{K}|}$ where $|\mathcal{B}|=\sum_n {|\mathcal{B}_n|}$. Thus, we solve \eqref{joint_elas} for $\boldsymbol \alpha$ and $\boldsymbol \upsilon$ for each task placement decision and select the decision that results in lowest $\sum_k \alpha_k$ as the optimal decision. Note that the exhaustive search may impose an excessive computational complexity. However, LTO is developed as a baseline for performance evaluation and it is not supposed to work in real-time.\\
\indent The optimization problem for solving $\boldsymbol \alpha$ and $\boldsymbol \upsilon$ is:
\begin{equation}\label{kkt_problem}
\begin{array}{ll}
\mathop{\min}\limits_{{\boldsymbol \upsilon},{\boldsymbol \alpha}}& \sum_{k\in \mathcal{K}}\alpha_k\\
\text{s.t.}&  \text{C1-a:}\quad \frac{L_k}{\upsilon_k}\le \tilde{T}_k+\alpha_k, \quad \forall k\in\mathcal{K}\\
&\text{C2:}\quad {\sum_{k\in \mathcal{K}_n}\upsilon_k \le \Upsilon_n, \quad \forall n,}\\
\end{array}
\end{equation}
where $\tilde{T}_k=T_k-T^{\rm prop}_k-T^{\rm tx}_k$ and $\mathcal{K}_n$ is the set of tasks to be executed at executing server $n$. Problem \eqref{kkt_problem} is convex in both $\boldsymbol \alpha$ and $\boldsymbol \upsilon$. As a result, the KKT conditions determine the optimal solution. To derive the KKT conditions, we first write the Lagrangian function as follows:
\begin{equation}\label{lagrange}
\mathcal{L}=\sum_{k\in \mathcal{K}}\left(\alpha_k+\gamma_k( \frac{L_k}{\upsilon_k}- \tilde{T}_k-\alpha_k )-\eta_k\alpha_k-\mu_k\upsilon_k\right)+\sum_{n\in \mathcal{N}}\lambda_n\left(\sum_{k\in \mathcal{K}_n}\upsilon_k -\Upsilon_n\right).
\end{equation}
By derivating the Lagrangian function with respect to $\alpha_k$ and $\upsilon_k$ we have:
\begin{equation}\label{kkt1}
\frac{\partial\mathcal{L}}{\partial\alpha_k}=1-\gamma_k-\eta_k=0, \quad \forall k\in \mathcal{K},
\end{equation}
and
\begin{equation}\label{kkt2}
\frac{\partial\mathcal{L}}{\partial\upsilon_k}=-\gamma_k \frac{L_k}{\upsilon_k^2}-\mu_k+\lambda_n=0, \quad \forall k\in \mathcal{K}_n.
\end{equation}
In addition, the complementary slackness conditions are:
\begin{align}
\gamma_k( \frac{L_k}{\upsilon_k}- \tilde{T}_k-\alpha_k )=0, \quad \forall k\in \mathcal{K},\label{kkt3}\\
\lambda_n\left(\sum_{k\in \mathcal{K}_n}\upsilon_k -\Upsilon_n\right)=0, \quad \forall n\in \mathcal{N},\label{kkt4}\\
\eta_k\alpha_k=0, \quad \forall k\in \mathcal{K},\label{kkt5}\\
\mu_k\upsilon_k=0, \quad \forall k\in \mathcal{K}\label{kkt6}.
\end{align}
Constraint C1-a implies $\upsilon_k>0$. Hence, from \eqref{kkt6} we have $\mu_k=0$ and condition \eqref{kkt2} results in:
\begin{equation}\label{upsilon_kkt}
\upsilon_k=\sqrt{\frac{L_k}{\lambda_n}}, \quad \forall k\in \mathcal{K}_n,
\end{equation}
which implies $\lambda_n>0$. Thus, \eqref{kkt4} gives:
\begin{equation}\label{capacity_kkt}
\sum_{k\in \mathcal{K}_n}\upsilon_k =\Upsilon_n, \quad \forall n\in \mathcal{N}.
\end{equation}
On the other hand, when \eqref{joint_op} is infeasible, we get $\alpha_k>0$. Thus, \eqref{kkt5} leads to $\eta_k=0$ and condition \eqref{kkt1} results in $\gamma_k=1$. As a result, from \eqref{kkt3} we get:
\begin{equation}\label{alpha_kkt}
\alpha_k=\frac{L_k}{\upsilon_k}- \tilde{T}_k, \quad \forall k\in \mathcal{K}.
\end{equation}
Having $\alpha_k\ge0$ and \eqref{upsilon_kkt}, the optimal non-negative variable is:
\begin{equation}\label{alpha_kkt1}
\alpha_k=[\sqrt{L_k \lambda_n}- \tilde{T}_k]^+, \quad \forall k\in \mathcal{K}_n,
\end{equation}
wherein $\lambda_n$ is found such that:
\begin{equation}\label{capacity_eq}
\sum_{k\in \mathcal{K}_n}\frac{L_k}{\tilde{T}_k + \alpha_k} =\Upsilon_n, \quad \forall n\in \mathcal{N}.
\end{equation}
Then, the optimal values of $\alpha_k$ and $\upsilon_k$ are found as in \eqref{alpha_kkt1} and \eqref{upsilon_kkt}, respectively. Having the optimal solution of \eqref{kkt_problem} for all possible $\boldsymbol \xi$, the optimal solution of \eqref{joint_elas} is the solution with lowest objective of \eqref{kkt_problem}.
\section{Simulation Results} \label{simulation-results}
In this section, we evaluate the performance of JTO. The setup of the simulation is presented in Table \ref{simulation_setup}. We assume that $U=4$ RRHs are placed with inter-site distance of 100 m and all users are served in an area of 100 m radius with a given user-RRH assignment. The nodes in the transport network are divided into three tiers based on their distance from UEs: the local tier, the regional tier, and the national tier. Although the number of serving nodes is very large, there are some distant nodes in each tier that impose a large propagation latency. Hence, we only incorporate the nodes with reasonable propagation latency in the transport network \cite{tran2019joint}. Network graph $G$ consists of $N=6$ nodes: $\bar{n}$ at the local tier with zero propagation latency, three nodes at the regional tier with relatively low propagation latency, and two distant nodes at the national tier. For simplicity of comparison, we assume that all nodes have the same computational capacity and all tasks are of the same size, load, and maximum acceptable latency, i.e., $D_k = D$, $L_k = L$, and $T_k = T$, $\forall k$. Moreover, we assume equal propagation latency and capacity for the network links. Note that the relatively low value of link capacity ($0.4$ Gbps) is the amount of capacity solely reserved for task offloading. Finally, the simulations are performed on a 3.30 GHz Core i5 CPU and 16 GB RAM.

\begin{table}[!t]
	\scriptsize
	\renewcommand{\arraystretch}{1.05}
	\centering
	\caption{Simulation Setup.}
	\label{simulation_setup}
	\begin{tabular}{| c| c| c| c| }	
		\hline
		\textbf{Parameter}& \textbf{Value} & \textbf{Parameter} & \textbf{Value}\\\hline		
		$L_k$& $10^6$ CPU Cycles & $\delta_{(m,m')}$& $10$ ms\\\hline
		$M$& $32$ Antennas & $\Lambda_n$& $10^{-28}$ \cite{zhou2018computation}\\\hline
		$D_k$& $0.1$ Mbits & Path Loss& $128.1+ 37.6 \log Q$ \cite{xia2018power}\\\hline		
		$\Upsilon_n$& $10^9$ CPU Cycles per Second{\cite{yang2018mobile}} & $U$& $4$\\\hline
		{$P^{\rm max}_k$}& {$0.5$ Watt} & ISD& $100$ m \\\hline
		$B_{(m,m')}$& $0.4$ Gbps & $W$& $20$ MHz \cite{xia2018power}\\\hline
		$B_{{\rm f},u}$& $0.6$ Gbps & Noise power& $-150$ dBm/Hz \cite{xia2018power}\\\hline
	\end{tabular}
\end{table}
{Fig. \ref{ar} (a)} reports the performance of the feasibility analysis in JTO, showing the acceptance ratio versus $T$. The acceptance ratio is defined as the ratio of accepted services by the feasibility analysis over the total number of the requested tasks. Note that the acceptance ratio increases by increasing $T$. This is due to the fact that the tasks with higher $T$ need less transmit power and computational resources to be served. Moreover, for higher $T$, a larger number of nodes are available for task offloading.
\textcolor{black}{In addition, we solve \eqref{joint_elas} by the alternate search method (ASM), in which \eqref{joint_elas} is alternatively solved for each variable. Note that the sub-problem of $\boldsymbol \upsilon$  is solved by CVX and the sub-problem of $\boldsymbol \xi$  is solved by MOSEK (details are not provided due to space limitation). The effectiveness of JTO against ASM is also shown in {Fig. \ref{ar} (a)}. Note that for latencies smaller than $75$ ms, JTO outperforms ASM. Moreover, the performance of both methods is identical for low values of $ T$. This is due to the fact that the set of accessible NFV-enabled nodes for low values of $T$ is restricted to $\bar{n}$ and therefore, JTO is not able to offload the tasks to more distant NFV-enabled nodes because their propagation latencies violate the E2E latency constraints.}

The acceptance ratio of JTO for different number of tasks is shown in {Fig. \ref{ar} (b)}. Since the amount of available resources is limited, the acceptance ratio is decreasing with {the} increase in the total number of {tasks}. Again, the superiority of JTO over ASM is observed.
    \begin{figure}[h]
        \centering
        \subfloat[Acceptance ratio vs. $T$ for $K=30$.]{	\includegraphics[width=7cm, height=3.5cm]{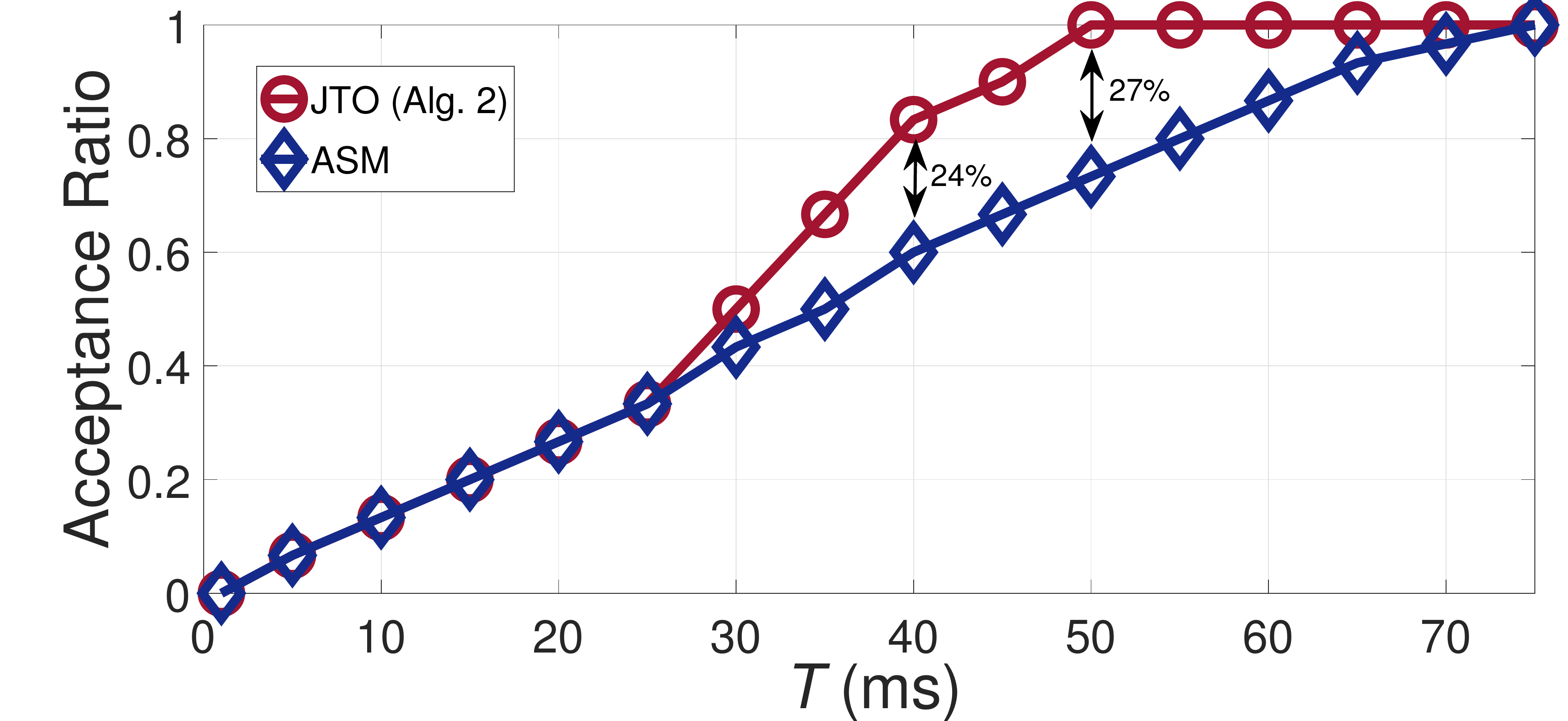}\label{ar_vs_tau}}
        \subfloat[Acceptance ratio vs. $K$ for $T=40$ ms.]{ \includegraphics[width=7cm, height=3.5cm]{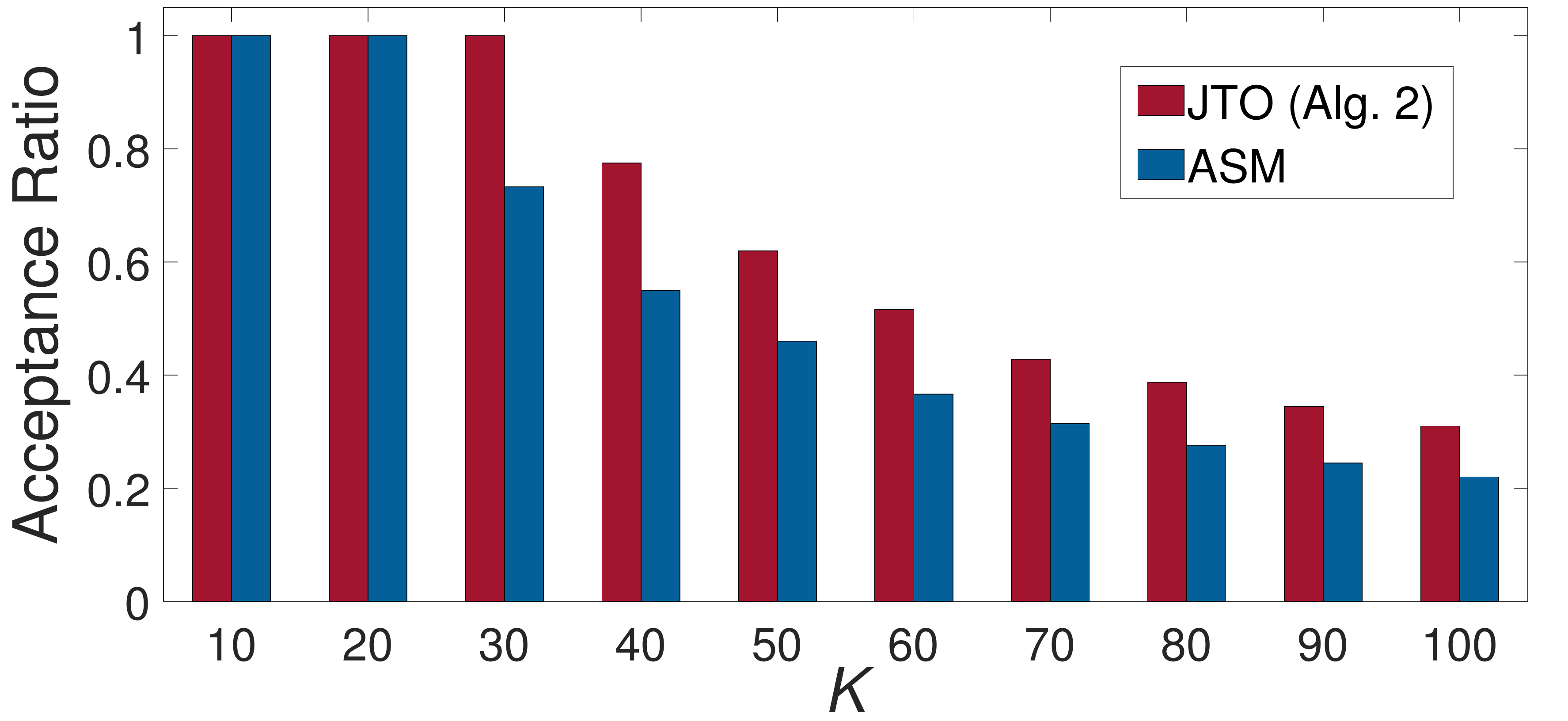}\label{ar_vs_k}}
        \caption{\small Acceptance ratio vs. $T$ and $K$}\label{ar}
    \end{figure}

\vspace{-20 pt}
The convergence of Algorithm \ref{feasiblity_alg} is shown in {Fig. \ref{elas_fig} (a)}. As {expected}, the sum of non-negative variables is decreasing in each iteration. Furthermore, Algorithm \ref{feasiblity_alg} converges faster than ASM, which stems from higher acceptance ratio of JTO.
    \begin{figure}[h]
        \centering
        \subfloat[Convergence of admission control algorithm for \newline $ T=20$ ms and $K=30$.]{%
	\label{elas_vs_iteration}
	\includegraphics[width=7 cm, height=4 cm]{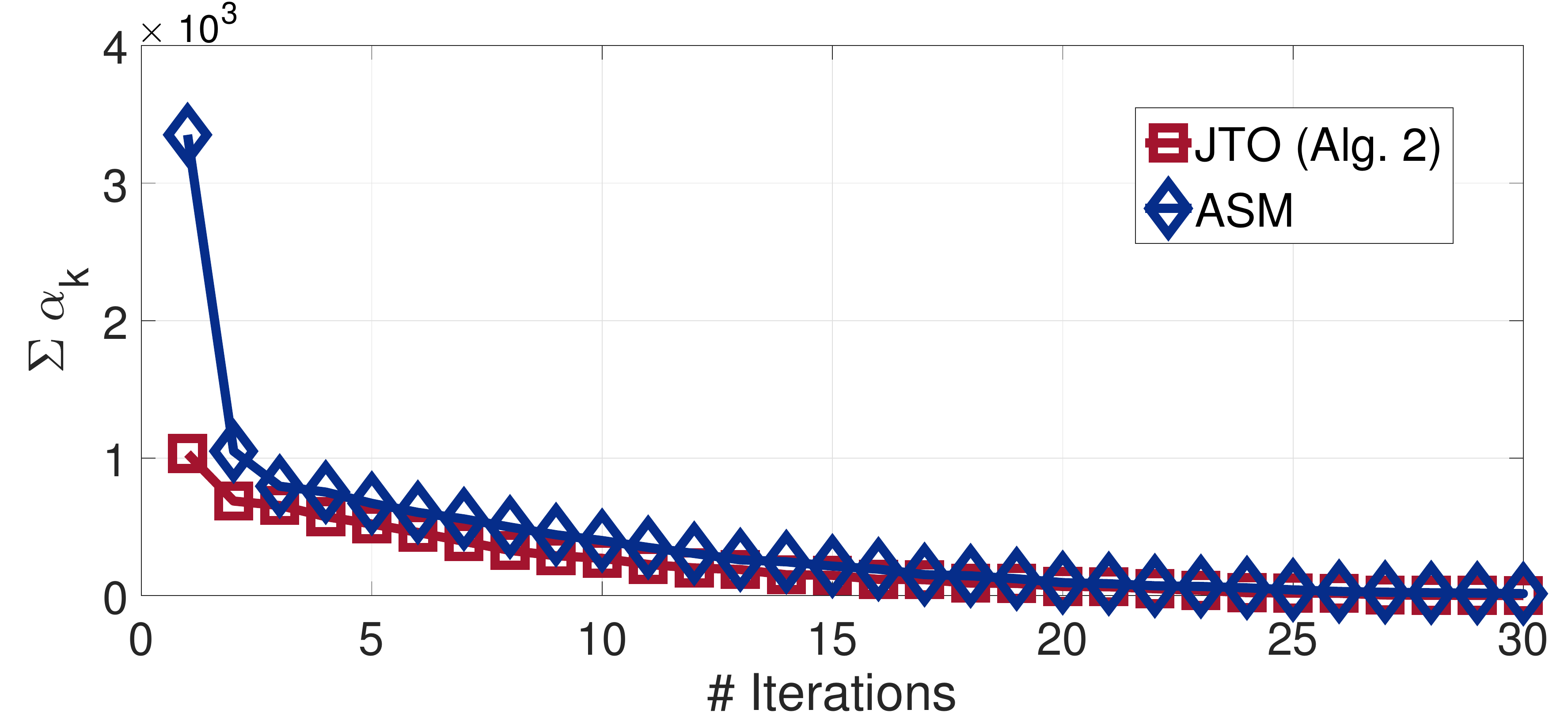}}
        \subfloat[Acceptance ratio of joint vs. disjoint methods in terms of $T^{\rm RAN}$ for $ T=30 $ ms and $K=30$ users. ]{%
           \includegraphics[width=7 cm, height=4 cm]{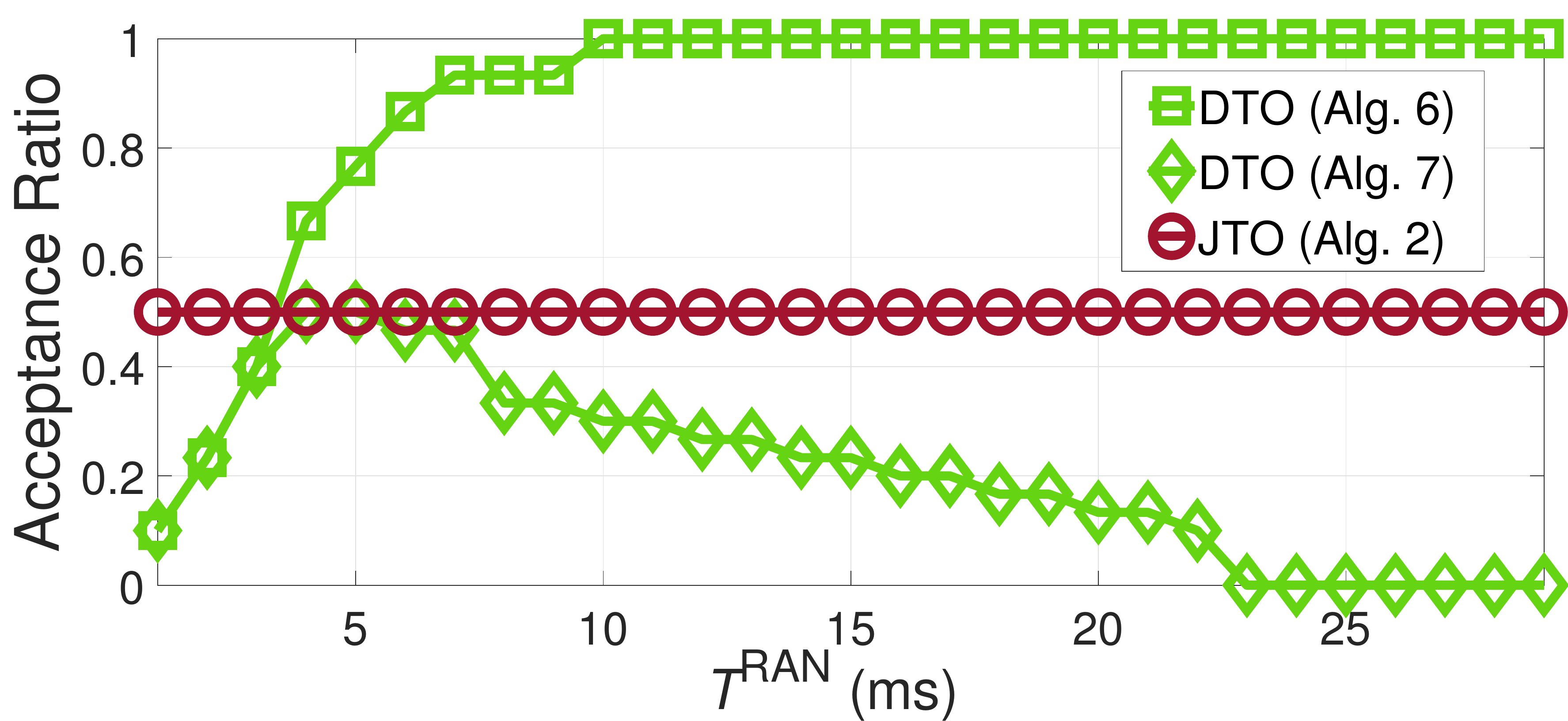}\label{joint_vs_disjoint}}
        \caption{\small Convergence and acceptance ratio of the proposed methods.}
        \label{elas_fig}
    \end{figure}

\textcolor{black}{The acceptance ratio of JTO is compared with DTO in Fig. \ref{elas_fig} (b). The acceptance ratio of JTO and DTO is depicted for $ T=30$ ms. For DTO, we obtain the acceptance ratio for different values of $T^{\rm RAN}\in (0, T)$. Moreover, the acceptance ratio of the feasibility analysis in the transmit power allocation phase of DTO, i.e., Algorithm \ref{pow_alg_disjoint}, is depicted. The acceptance ratio of DTO is increasing for small values of $T^{\rm RAN}$, that is, the small values of $T^{\rm RAN}$ impose high rates on users, which is impossible due to either insufficient bandwidth or limited fronthaul capacity. On the other hand, for larger values of $T^{\rm RAN}$, the acceptance ratio of Algorithm \ref{pow_alg_disjoint} is 1 but the task placement and computational resource allocation restricts the number of accepted tasks. Furthermore, JTO outperforms DTO in different values of $T^{\rm RAN}$.}

\textcolor{black}{{Fig. \ref{avg_fig} (a)} shows the average radio transmission latency, i.e., $\frac{1}{K}\sum_{k\in \mathcal{K}}T^{\rm tx}_k$, and the average execution latency of tasks, i.e., $\frac{1}{K}\sum_{k\in \mathcal{K}}T^{\rm exe}_k$ for different values of $D$ given $T=20$ ms. The average radio transmission latency increases by increasing $D$ and subsequently the average execution latency is decreased to maintain the maximum acceptable latency. Therefore, it is inferred that JTO efficiently manages the transmit power and the computational resources. Similarly, according to {Fig. \ref{avg_fig} (b)}, the average execution latency increases by increasing $L$ and subsequently this increase is compensated with lower radio transmission latency.}
    \begin{figure}[h]
        \centering
        \subfloat[Average radio transmission and execution latencies vs. $D$ for $T=20$ ms and $K=30$.]{%
	\includegraphics[width=7.5cm]{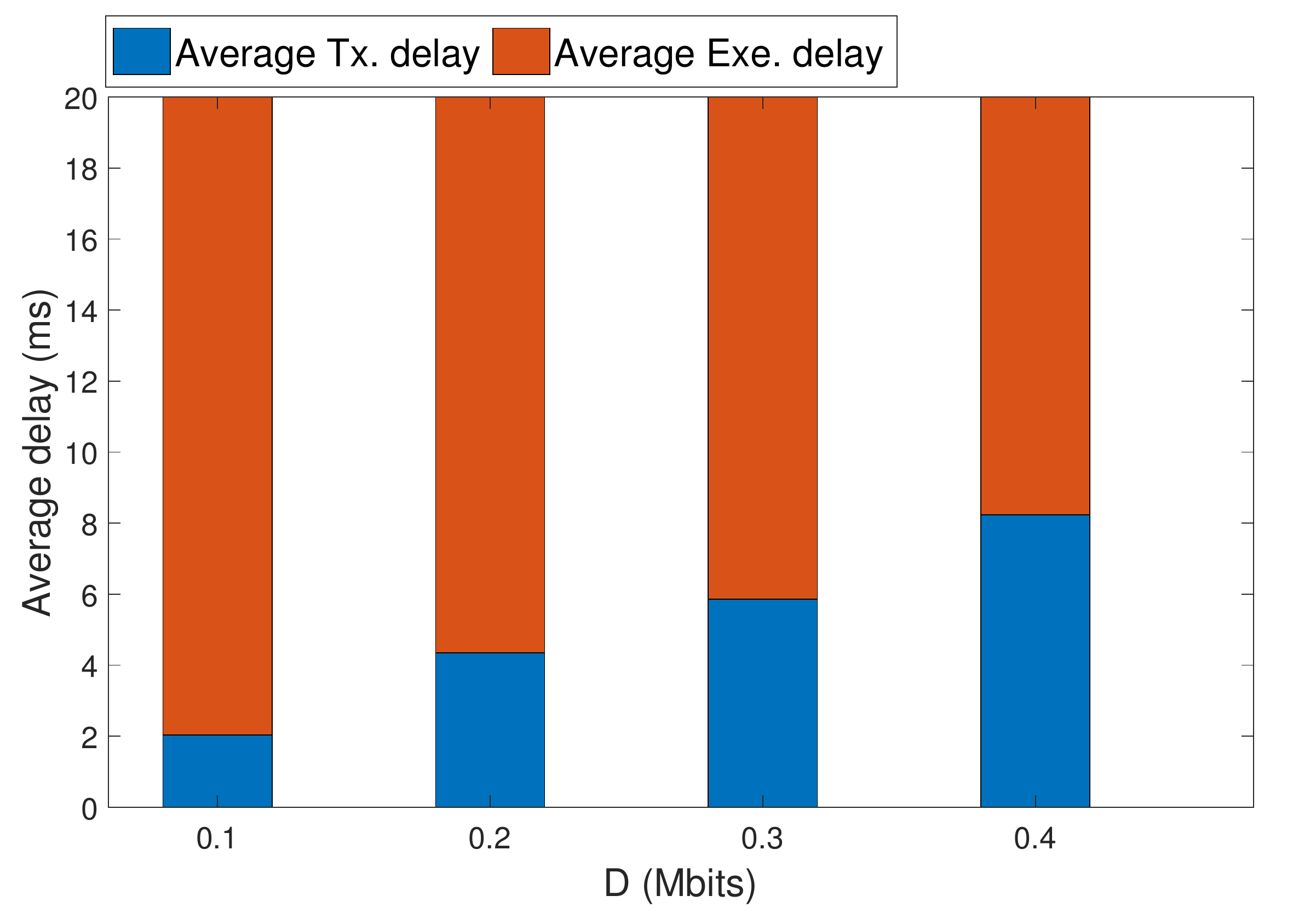}\label{Avg_delay_vs_D}}
\quad
        \subfloat[Average radio transmission and execution latencies vs. $L$ for $ T=20 $ ms and $K=30$. ]{
            \includegraphics[width=7.5cm]{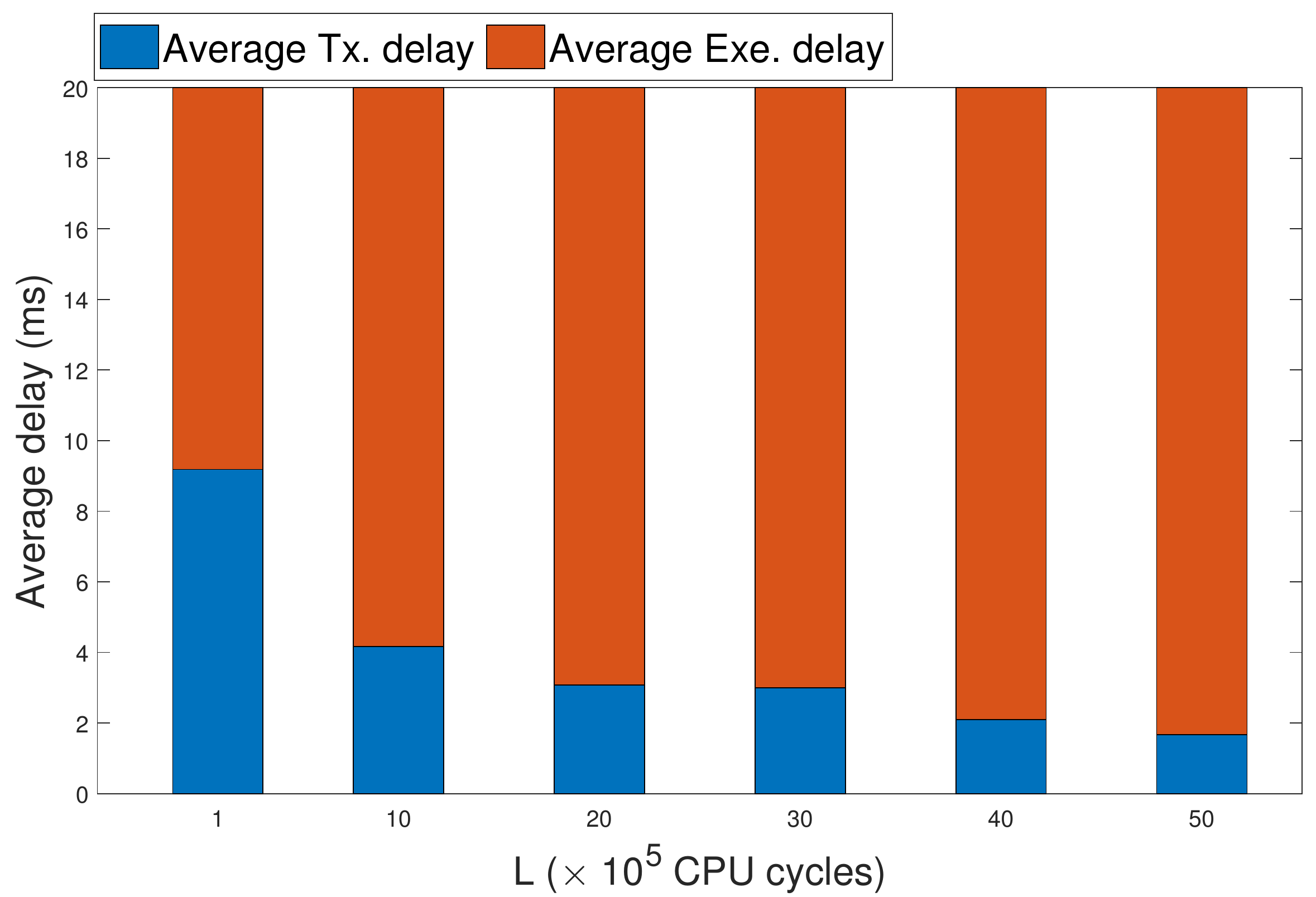}\label{Avg_delay_vs_L}}
        \caption{\small Average radio transmission and execution latencies in JTO.}
        \label{avg_fig}
    \end{figure}
\vspace{-20 pt}

\textcolor{black}{In Fig. \ref{tasks_nodes}, we assume there are three classes of tasks (each including $10$ tasks) with three different maximum acceptable latencies, i.e., $ T^{(1)}=10 $ ms, $ T^{(2)}$ =50 ms, and $ T^{(3)}=100$ ms. The classes $(1)$, $(2)$, and $(3)$ are considered as the sets of tasks with low, medium, and high latency requirements, respectively. Moreover, we assume there are three nodes (shown by rectangles): a local node (i.e., $\bar n$) with zero propagation latency, a regional node with $20$ ms propagation latency, and a national node with $40$ ms propagation latency. The propagation latencies are the summation of uplink and downlink propagation latencies. Fig. \ref{tasks_nodes} shows the task placement for different values of the processing capacity of nodes $C=\Upsilon_n,\forall n$. When $C=1$, none of the nodes is able to serve the tasks in class (1) due to their high resource utilization. However, the tasks in class $(2)$ are mainly served at the local node and class $(3)$ tasks are placed at regional and national nodes. When $C=10$, some of the tasks in class (1) are placed at the local node. Moreover, some tasks in class (2) and (3) are served at the local node as well. Furthermore, the national node does not serve any task because JTO places the tasks at the nearest nodes in order to reduce the transmit power. When $C=20$, more tasks in class (1) are served at the local node and the acceptance ratio reaches 1. Finally, when $C=50$, almost all of the tasks are placed at the local node to reduce the transmit power consumption.}
	\begin{figure}[!t]
		\centering
		\includegraphics[width=9cm]{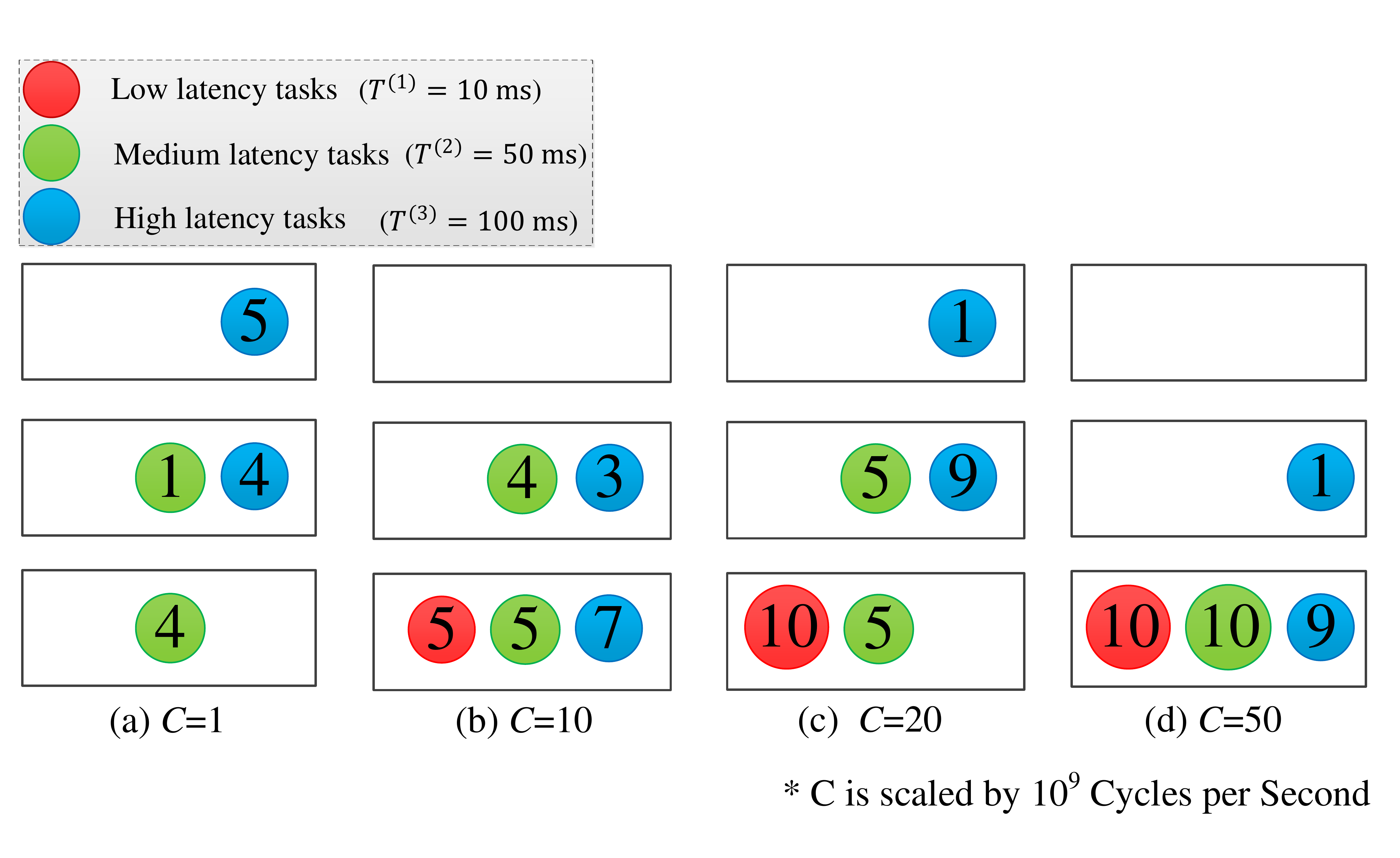}
		\caption{\small Placement of the different classes of tasks at three different tiers of nodes for $K=30$.}\label{tasks_nodes}
	\end{figure}
		
\textcolor{black}{Table \ref{ar_table} shows the acceptance ratio of each class for different values $C$. Note that the acceptance ratio of all classes is increased by increasing $C$. Moreover, the acceptance ratio of class (1) is lower than that of classes (2) and (3). The reason is twofold, one is due to high resource utilization by tasks of this class and another is due to limited number of available nodes for tasks with low latency requirement (only node $\bar n$ in this example).}
\begin{table}[!h]
	\renewcommand{\arraystretch}{1}
	\centering
	\caption{\small Acceptance ratio of {JTO for} different task classes vs. processing capacity of nodes.}
	\label{ar_table}
\begin{tabular}{cccc}
	\hline
	\textbf{\begin{tabular}[c]{@{}c@{}}Computational\\ capacity {($10^9$ CPU cycles/sec)} \end{tabular}} & \multicolumn{3}{c}{\textbf{Maximum acceptable latency (ms)}} \\ \cline{2-4} 
	& $ T^{(1)}=10$ & $ T^{(2)}=50$ & $ T^{(3)}=100$ \\ \hline
	$C=1$ & $0$ & $0.5$ & $0.9$ \\ 
	$C=10$ & $0.5$ & $0.9$ & $1$ \\
	$C=20$ & $1$ & $1$ & $1$ \\ \hline
\end{tabular}
\end{table}

\textcolor{black}{Fig. \ref{opato} shows the acceptance ratio of LTO and JTO for different values of maximum acceptable latency $T$. Due to the high computational complexity of exhaustive search in LTO, we consider a simple network graph comprised of two nodes connected with a single link. Moreover, the total number of tasks $|\mathcal{K}|$ is 20. The acceptance ratio of both JTO and LTO is lower for larger computational loads. Meanwhile, the acceptance ratio of JTO is almost the same as LTO for different values of $T$ and $L$.}
	\begin{figure}[!h]
		\centering
		\includegraphics[width=9cm]{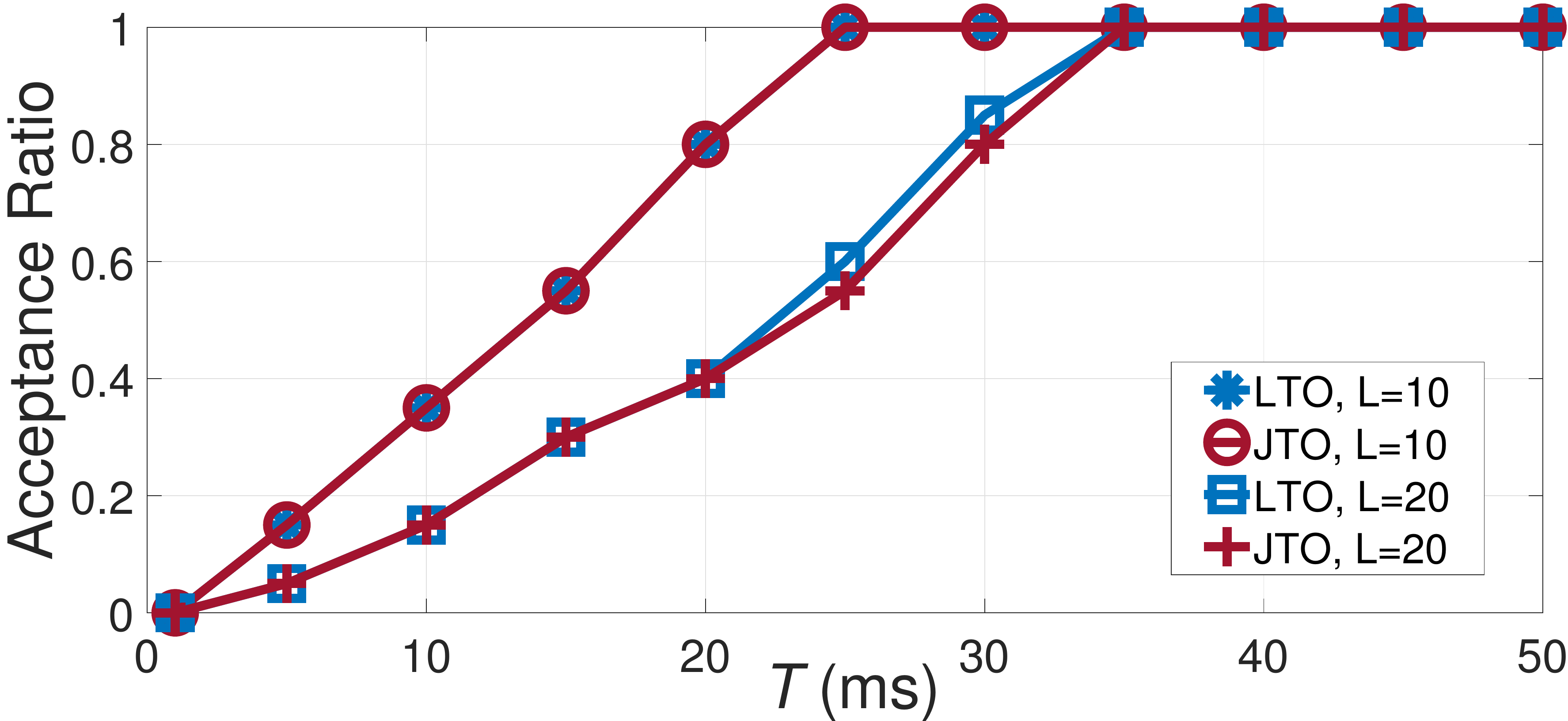}
		\caption{\small Acceptance ratio of LTO and JTO {vs.}  maximum acceptable latency.}\label{opato}
	\end{figure}
\vspace{-20 pt}
\section{Conclusions and Future Work} \label{conclusion}
\textcolor{black}{In this paper, we considered an energy-efficient task offloading problem under E2E latency constraints. We investigated the joint impact of radio transmission, propagation of tasks through the transport network, and execution of tasks on the experienced latency of tasks. Due to the non-convexity of the optimization problem, we decoupled the transmit power allocation from task placement and computational resource allocation. {The transmit power allocation was solved by adopting CCP} to convexify the sub-problem. The task placement and computational resource allocation were solved via our proposed heuristic method, which minimizes the sum of propagation and execution latencies. Furthermore, to ensure the feasibility of the optimization problem, we proposed a feasibility analysis that eliminates the tasks causing infeasibility. Simulation results showed the superiority of JTO over both DTO and ASM. The performance of DTO depended on the part of latency required to be met in the radio access network, i.e., $T^{\rm RAN}$. However, JTO showed higher acceptance ratios for different values of $T^{\rm RAN}$. As future work, we plan to incorporate task scheduling into JTO. Moreover, the investigation of an innovative solution that divides the required computational load of each task among several nodes will be an interesting future research activity.}

\bibliographystyle{ieeetr}
\bibliography{citation_Offloading}{}

%
%
%
%
%




\end{document}